\newcommand{\TheTitle}{Capturing points with a rotating polygon\\
  (and a 3D extension)}
\def\q5uad{\quad\quad\quad\quad\quad}
\def\Hur{H_{\reflectbox{\rotatebox[origin=c]{270}{$\Lsh$}}}}
\def\Hdl{H_{\reflectbox{\rotatebox[origin=c]{90}{$\Lsh$}}}}
\newtheorem{theorem}{Theorem}
\newtheorem{problem}[theorem]{Problem}
\newtheorem{lemma}[theorem]{Lemma}
\newtheorem{corollary}[theorem]{Corollary}
\newcommandx*{\ch}[1][usedefault, 1=P]{\mathcal{CH}({#1})}
\newcommandx*{\cp}[1][usedefault, 1=p]{c_{#1}}
\newcommand{\rp}{r}
\begin{document}


  \title{\TheTitle}

  %
\author[1]{
Carlos Alegr\'{i}a-Galicia
}

\author[2]{
David Orden
}

\author[3]{
Leonidas Palios
}

\author[4]{
Carlos Seara
}

\author[5]{
Jorge Urrutia
}

\affil[1]{
 Posgrado en Ciencia e Ingenier\'{i}a de la    Computaci\'{o}n, Universidad Nacional Aut\'{o}noma de M\'{e}xico.
 \texttt{calegria@uxmcc2.iimas.unam.mx}
 }

\affil[2]{Departamento de F\'{i}sica y Matem\'{a}ticas,
    Universidad de Alcal\'{a}, Spain.
    \texttt{david.orden@uah.es}}

\affil[3]{Department of Computer Science and Engineering,
    University of Ioannina, Greece.
      \texttt{palios@cs.uoi.gr}}

\affil[4]{Departament de Matem\`{a}tiques, Universitat
    Polit\`{e}cnica de Catalunya, Spain.
      \texttt{carlos.seara@upc.edu}
}

\affil[5]{Instituto de Matem\'{a}ticas, Universidad Nacional
    Aut\'{o}noma de M\'{e}xico.
      \texttt{urrutia@matem.unam.mx}
}


\date{}

\maketitle

  \begin{abstract}
    We study the problem of rotating a simple polygon to contain the
    maximum number of elements from a given point set in the plane.~We consider
    variations of this problem where the rotation center is a given
    point or lies on a line segment, a line, or a polygonal chain.~We
    also solve an extension to 3D where we rotate a polyhedron around
    a given point to contain the maximum number of elements from a
    set of points in the space.
  \end{abstract}

\textbf{Keywords}:
Points covering, rotation, geometric optimization, polygon, polyhedron.



\section{Introduction}

Given a simple polygon~$P$ on the plane, the \emph{Polygon Placement Problem} consists in finding a function~$\uptau$,
usually
consisting of the composition of a rotation and a translation, such that $\uptau(P)$ satisfies some
geometric constraints.~In the literature, $\uptau(P)$ is known as a
\emph{placement} of $P$.~The oldest problem of this family
who, given two polygons $P$~and~$Q$, explored the problem of finding,
if it exists, a
%
%
%
placement of~$P$ that contains $Q$.~The most recent contribution to
these problems, in 2014,
%
%
can be found in \cite{barequet_2014} (see Section~1.4 there for a
summary of previous work).~Among other results, for a point set~$S$
and a simple polygon~$P$, they show how to compute a
%
placement of~$P$ that contains as many points of~$S$ as possible.~If
$n$~and~$m$ are the sizes of~$S$ and~$P$ respectively, their algorithm
runs in $O(n^3 m^3 \log (nm))$ time and $O(nm)$ space.

Although translation-only problems have also been
considered~\cite{agarwal_2002,barequet_1997}, surprisingly enough there are no previous results with~$\uptau$ being only a
rotation.~It is important to note that existing results with $\uptau$ being a composition of a rotation, a translation, and even a scaling, cannot be adapted to solve the rotation-only problem considered here: All those previous results reduce the search space complexity
by considering only placements where a constant number of points from
$S$ lie on the boundary of $P$ (see for example references
\cite{barequet_2014}~and~\cite{dickerson_1998} for algorithms based
respectively, on two-point and one-point placements).~Rotation-only
adaptations of these results would not allow the rotation center to be
fixed or restricted to lie on a given curve and therefore, cannot be
applied to the problems we deal with in this paper.~This is why the following \emph{Maximum Cover under Rotation (MCR)}
problems are considered in this paper:

\begin{problem}[Fixed MCR]\label{FixedMCR}
  \label{pro:intro:fixed_mcr}
  Given a point~$\rp$, a polygon $P$, and a point set~$S$ in the
  plane, compute an angle~$\theta\in [0,2\pi)$ such that, after
  clockwise rotating $P$ around $\rp$ by~$\theta$, the number of
  points of~$S$ contained in~$P$ is maximized.
\end{problem}

\begin{problem}[Segment-restricted MCR]
  \label{pro:intro:segment_mcr}
  Given a line segment~$\ell$, a polygon~$P$, and a point set~$S$ in
  the plane, find a point~$\rp$ on~$\ell$ and an
  angle~$\theta\in [0,2\pi)$ such that, after a clockwise rotating of
  $P$ around $\rp$ by~$\theta$, the number of points of~$S$ contained
  in~$P$ is maximized.
\end{problem}

In addition, we complete the scene opening a path towards the study of these problems in~3D, by presenting a three-dimensional version of Problem~\ref{FixedMCR}:

\begin{problem}[3D Fixed MCR]
  \label{pro:intro:3Dfixed_mcr}
  Given a point~$\rp$, a polyhedron~$P$, and a point set~$S$
  in~$\mathbb{R}^3$, compute the azimuth and
  altitude~$(\theta,\varphi)\in[0,2\pi]\times [-\pi,\pi]$ giving the
  direction in the unit sphere such that, after rotating a polyhedron
  $P$ by taking the $z$-axis to that direction, the number of points
  of~$S$ contained in~$P$ is maximized.
\end{problem}

Applications of polygon placement problems include global localization
of mobile robots, pattern matching, and geometric tolerance; see the
references in~\cite{barequet_2014}.~Rotation-only problems arise, e.g., in
robot localization using a rotating camera~\cite{robots_92}, with
applications to quality control of objects manufactured around an
axis~\cite{tolerance_1997}.

We first show that Problem~\ref{FixedMCR} is 3SUM-hard,
i.e., 
solving it in subquadratic time would imply an affirmative answer to
the open question of whether
a subquadratic time algorithm for 3SUM
exists, which is unlikely~\cite{gajentaan_1995}.~Then, we present two algorithms to solve
Problem~\ref{FixedMCR}: The first one requires $O(nm\log (nm))$ time and $O(nm)$ space,
for $n$~and~$m$ being the sizes of~$S$ and~$P$, respectively.~The
second one takes $O((n+k) \log n + m \log m)$ time and $O(n+m+k)$
space, for~$k$ in~$O(nm)$ being the number of certain events.
%
%
We also describe an algorithm that solves
Problem~\ref{pro:intro:segment_mcr} in $O(n^2 m^2\log(nm))$ time and
$O(n^2 m^2)$ space.~This algorithm can be easily extended to solve
variations of Problem~\ref{pro:intro:segment_mcr} where~$\rp$ lies on
a line or a polygonal chain.~Furthermore, our techniques for
Problem~\ref{pro:intro:fixed_mcr} can be extended to~3D to solve
Problem~\ref{pro:intro:3Dfixed_mcr} within the same time and space
complexities as Problem~\ref{pro:intro:segment_mcr}.

\section{Fixed MCR (Problem~\ref{pro:intro:fixed_mcr})}
\label{sec:fixed_mcr}

Given a point $\rp$ on the plane and a point $p \in S$, let $C_p(\rp)$
be the circle with center~$\rp$ and radius~$|\overline{\rp p}|$.~If we
rotate~$S$ in the counterclockwise direction around~$\rp$, $C_p(\rp)$
is the curve described by $p$ during a $2\pi$ rotation of $S$
around~$\rp$.~The endpoints of the circular arcs resulting from
intersecting $P$ and $C_p(\rp)$ determine the rotation angles where
$p$ enters (\emph{in-event}) and leaves (\emph{out-event})
the polygon~$P$.~In the worst case, the number of such events per element
of~$S$ is $O(m)$, see Figure~\ref{fig:fixed_mcr:comb}.~If we consider all
the points in $S$ we could get $O(nm)$ events.

\begin{figure}[ht]
  \centering{}
  \includegraphics{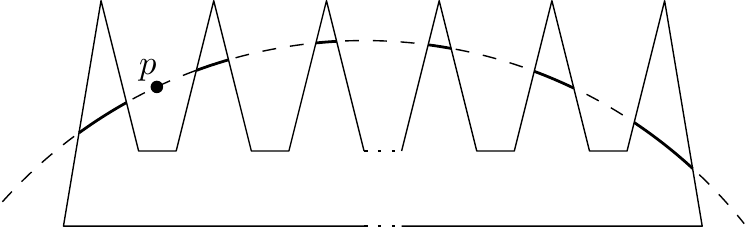}
  \caption{A comb-shaped simple polygon can generate $\Omega(m)$ in-
    and out-events per point in~$S$.}
  \label{fig:fixed_mcr:comb}
\end{figure}

\subsection{A 3SUM-hard reduction}
\label{sec:3sum-hard}

We show next that Problem~\ref{pro:intro:fixed_mcr} is 3SUM-hard, by a
reduction from the \emph{Segments Containing Points Problem} that was
proved to be 3SUM-hard in~\cite{barequet_2001}:~Given a set $A$ of $n$
real numbers and a set $B$ of $m=O(n)$ pairwise-disjoint intervals on
the real line, is there a real number $u$ such that
$A + u \subseteq B$?

\begin{theorem}\label{thm:3sum-hard}
  The Fixed MCR problem is 3SUM-hard.
\end{theorem}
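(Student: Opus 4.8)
The plan is to reduce from the \emph{Segments Containing Points Problem}, which asks: given a set $A = \{a_1,\dots,a_n\}$ of reals and a set $B$ of $m = O(n)$ pairwise-disjoint intervals on the real line, does there exist $u$ with $A + u \subseteq B$? I would wrap the real line onto a circle. First I fix a point $\rp$ and a large enough radius $R$, and I place the numbers of $A$ on the circle of radius $R$ centered at $\rp$: each $a_i$ becomes a point $p_i \in S$ at angle proportional to $a_i$ (after scaling all of $A$ and $B$ so that the total extent fits strictly inside one full turn, leaving a guard gap so that ``wrap-around'' coincidences cannot occur). Thus rotating the whole configuration by an angle $\theta$ around $\rp$ corresponds exactly to translating $A$ by the real number $u = \theta R$ (up to the fixed scaling).

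Next I build the polygon $P$ to encode the intervals $B$. The intervals $B = \{I_1,\dots,I_m\}$ are turned into $m$ disjoint circular arcs on the same circle of radius $R$; I realize their union as the intersection of $P$ with that circle by making $P$ a thin annular-sector-like simple polygon: for each interval $I_j$ I include a thin ``tooth'' or sector that straddles the circle of radius $R$ precisely along the arc corresponding to $I_j$, and between consecutive teeth the boundary of $P$ dips strictly inside radius $R$ so that the circle is outside $P$ there. Because the intervals are pairwise disjoint, these teeth do not overlap and $P$ can be drawn as one simple polygon with $O(m)$ vertices; the key geometric invariant is that for every rotation angle, a point $p_i$ lies in $P$ if and only if its current angular position falls in one of the arcs, i.e.\ if and only if $a_i + u \in B$.

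With this correspondence in hand, the reduction is immediate: rotating $P$ clockwise by $\theta$ around $\rp$ captures all $n$ points of $S$ simultaneously if and only if $A + u \subseteq B$ for $u = \theta R$, so the maximum cover equals $n$ exactly when the Segments Containing Points instance is a YES-instance. The transformation uses $O(n)$ points and a polygon with $O(m) = O(n)$ vertices and runs in linear time, so a subquadratic algorithm for Fixed MCR would yield a subquadratic algorithm for Segments Containing Points, contradicting its 3SUM-hardness \cite{barequet_2001}.

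The main obstacle I anticipate is not the combinatorics but the careful geometric bookkeeping that makes $P$ a genuinely \emph{simple} polygon while still intersecting the radius-$R$ circle in exactly the prescribed arcs and nothing more: one must choose the inner radius of the ``valleys'' between teeth, the angular width of the teeth, the scaling factor from reals to angles, and the guard gap near the wrap-around point so that (a) no spurious intersections are created, (b) the teeth remain pairwise non-overlapping, and (c) no point of $S$ ever sits exactly on the boundary of $P$ for the relevant angles (so that ``contained in $P$'' is unambiguous). I would state these as a few explicit inequalities on the parameters and verify them, rather than belabor the construction, and also note that the same construction shows hardness even when $P$ is required to be convex-ish per tooth, or can be tightened to a single connected comb as in Figure~\ref{fig:fixed_mcr:comb}.
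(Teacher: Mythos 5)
Your proposal is correct and follows essentially the same route as the paper: both reduce from the Segments Containing Points Problem by wrapping the real line onto a circle around the rotation center (with a guard gap so wrap-around causes no spurious coincidences), turning the points of $A$ into $S$ and the intervals of $B$ into arcs realized as the intersection of a simple polygon with that circle, so that a rotation capturing all $n$ points corresponds exactly to a translation $u$ with $A+u\subseteq B$. The only difference is cosmetic: the paper builds the polygon from tangent lines at the arc endpoints rather than your teeth-and-valleys construction.
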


\begin{proof}
  Let $I$ be an interval of the real line that contains the set $A$ of
  points, and the set $B$ of intervals of an instance of the Segments
  Containing Points Problem.~Wrap $I$ on a circle $C$ whose perimeter
  has length at least twice the length of $I$.~This effectively maps
  the points in~$A$ and the intervals in~$B$ into a set~$A'$ of points
  and a set~$B'$ of intervals on $C$.

  Clearly, finding a translation (if it exists) of the elements of $A$
  such that $A + u \subseteq B$, is equivalent to finding a rotation
  of the set of points $A'$ around the center of $C$ such that all of
  the elements of $A'$ are mapped to points contained in the intervals
  of $B'$.
  \begin{figure}[ht]
    \centering
    \subcaptionbox{\label{fig:3sum-hard:mapping:1}}
    {\includegraphics{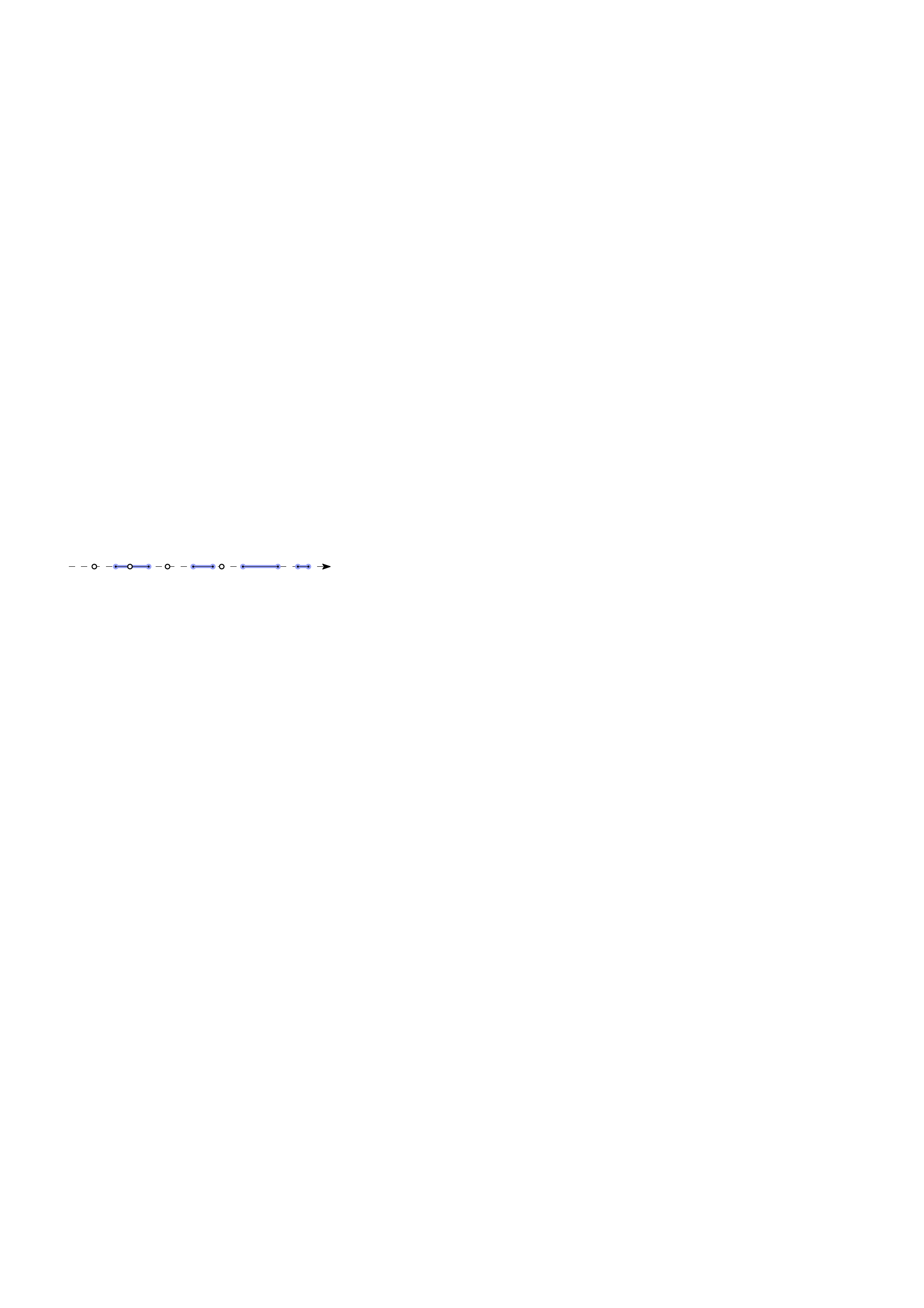}}
    \\[1.5em]
    \subcaptionbox{\label{fig:3sum-hard:mapping:2}}
    {\includegraphics{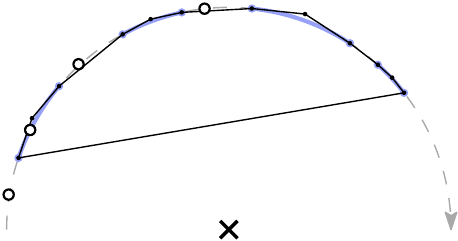}}
    \caption{Wrapping $I$ from \subref{fig:3sum-hard:mapping:1} the
      real line to \subref{fig:3sum-hard:mapping:2} a circle
      $C$.~Intervals forming $B$ and $B'$ are highlighted with
      blue.~Elements of $A$ and $A'$ are represented by white
      points.~Additional vertices forming the polygon are the
      intersection points between the tangents to $C$ at the endpoints
      of each interval in $B^{\prime}$.}
    \label{fig:3sum-hard:mapping}
  \end{figure}
  To finish our reduction, construct a polygon as shown in
  Figure~\ref{fig:3sum-hard:mapping}.
\end{proof}

\subsection{An $O(nm\log(nm))$ algorithm}\label{sec:nmlognm}

Here we present an $O(nm\log(nm))$
algorithm for
Problem~\ref{pro:intro:fixed_mcr} (note that, by Theorem~\ref{thm:3sum-hard}, this complexity is close to be optimal):

\begin{enumerate}

\item \label{enum:nmlognm:1} \textbf{Intersect rotation
    circles.}~Given a fixed point $\rp$, compute the intersection
  points of $C_{p_j}(\rp)$ and $P$, for all $p_j \in S$.~Each of these
  points determines an angle of rotation of $p_j$ around $\rp$ when
  $p_j$ enters or leaves~$P$, see
  Figure~\ref{fig:fixed_pc:in-out_events}.~These angles, in turn,
  determine a set of intervals
  $\mathcal I_j =\{I_{j,1}, \ldots , I_{j, m_j}\}$ whose endpoints
  correspond to the rotation angles in which $p_j$ enters or
  leaves~$P$ and, hence, specify the rotation angles on the unit
  circle for which $p_j$ belongs to~$P$, see again
  Figure~\ref{fig:fixed_pc:in-out_events}.~Let
  $\mathcal I = \mathcal I_1 \cup \cdots \cup \mathcal I_n$.~The set
  of endpoints of the intervals in $\mathcal I$ can be sorted in
  $O(mn \log (mn))$ time.


  \begin{figure}[ht]
    \centering
    \includegraphics{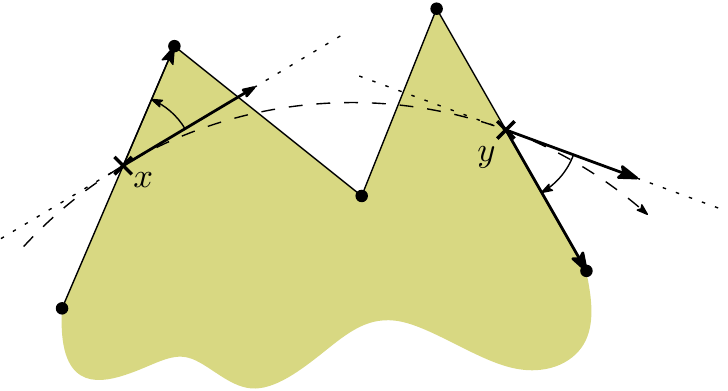}
    \caption{An in-event at $x$ (left turn), and an out-event at $y$
      (right turn).}
    \label{fig:fixed_pc:in-out_events}
  \end{figure}

\item \label{enum:nmlognm:4} \textbf{Compute the angle of maximum
    coverage.} Using standard techniques, we can now perform a sweep
  on the set $\mathcal I = \mathcal I_1 \cup \cdots \cup \mathcal I_n$
  as depicted in Figure~\ref{fig:nmlognm:table}.
  \begin{figure}[ht]
    \centering
    \includegraphics{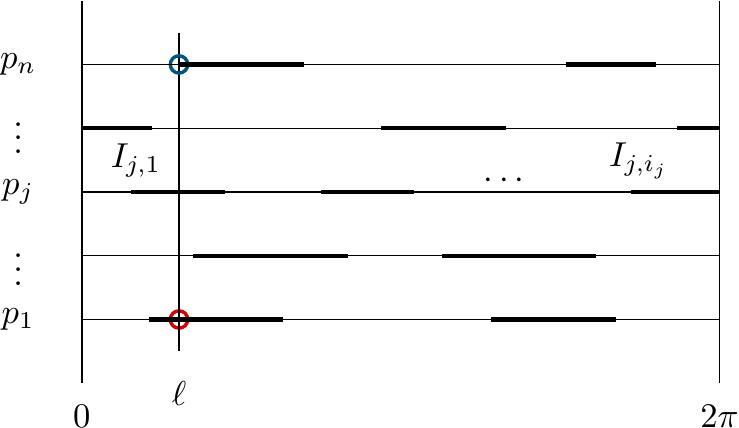}
    \caption{The events sequence and the sweeping line at angle
      $\theta$.~Highlighted with a red circle, the intersection of
      line $\ell$ with an interval corresponding to $p_1$ (where $p_1$ is
      inside $P$).~Highlighted with a blue circle, the intersection of
      line $\ell$ with one of the endpoints of an interval
      corresponding to $p_n$ (an in-event).}
    \label{fig:nmlognm:table}
  \end{figure}
  During the sweeping process, we maintain the number of points of~$S$
  lying in~$P$.~If an in-event or an out-event occurs, that number is
  increased or decreased by one, respectively.~At the end of the
  sweeping process, we report the angular interval(s) where the number
  is maximized.
\end{enumerate}

Since the complexity of our algorithm is dominated by
Step~\ref{enum:nmlognm:1}, which takes $O(nm\log (nm))$ time, we
conclude the following result.

\begin{theorem}\label{thm:nmlognm}
  The Fixed MCR problem can be solved in $O(nm\log(nm))$ time and
  $O(nm)$ space.
\end{theorem}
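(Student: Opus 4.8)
The plan is to verify that the two-step algorithm outlined above is correct and to account carefully for its time and space complexity, with the running time dominated by Step~\ref{enum:nmlognm:1}. The key observation underlying correctness is that, as $P$ rotates clockwise around $\rp$ by an angle $\theta$, a point $p_j \in S$ lies inside $P$ exactly when $\theta$ belongs to one of the angular intervals in $\mathcal{I}_j$; equivalently, rotating $P$ clockwise by $\theta$ is the same as rotating $S$ counterclockwise by $\theta$ relative to $P$, so $p_j$ traces the circle $C_{p_j}(\rp)$ and the arcs of $C_{p_j}(\rp)$ that fall inside $P$ (whose angular extents are precisely the intervals $I_{j,1},\dots,I_{j,m_j}$) are the rotation angles for which $p_j \in P$. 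Hence the number of points of $S$ inside the rotated polygon, as a function of $\theta$, equals the coverage depth of the multiset of intervals $\mathcal{I} = \mathcal{I}_1 \cup \cdots \cup \mathcal{I}_n$ at $\theta$, and maximizing the former is exactly maximizing the latter.

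For Step~\ref{enum:nmlognm:1}, I would argue the bound on $\sum_j m_j$ as follows: the circle $C_{p_j}(\rp)$ crosses the boundary of the simple polygon $P$ in at most $O(m)$ points, since each of the $m$ edges of $P$ is a segment and meets a fixed circle in at most two points; consequently $\mathcal{I}_j$ consists of $m_j = O(m)$ intervals and $|\mathcal{I}| = \sum_{j=1}^n m_j = O(nm)$. Computing these intersection points for all $p_j$ takes $O(nm)$ time (constant work per edge–circle pair), and sorting the $O(nm)$ interval endpoints around the unit circle takes $O(nm \log(nm))$ time; one must also take a moment to handle intervals that wrap around angle $0$, e.g.\ by cutting them at $0$ or by treating angles modulo $2\pi$, and to determine the coverage value at the start of the sweep (the number of $p_j$ initially inside $P$ at $\theta = 0$), which can be read off from the sorted structure or computed directly in $O(nm)$ time. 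For Step~\ref{enum:nmlognm:4}, the sweep processes the $O(nm)$ sorted events in a single pass, incrementing a counter at each in-event and decrementing it at each out-event while tracking the running maximum, which is $O(nm)$ time; the total is therefore $O(nm\log(nm))$, and the space is $O(nm)$ to store $\mathcal{I}$ and the sorted event list.

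I do not expect a genuine obstacle here: the main content is the reduction of the geometric quantity (points covered as a function of rotation angle) to interval coverage depth, together with the elementary counting argument that a circle meets a simple $m$-gon in $O(m)$ points. The one place warranting a careful sentence or two is the bookkeeping for wrap-around intervals and the initialization of the sweep counter, since a naive treatment could either miss the optimum near $\theta = 0$ or double-count; handling angles cyclically and fixing the counter's starting value resolves this cleanly without affecting the asymptotic bounds.
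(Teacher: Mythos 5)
Your proposal follows the same two-step algorithm and analysis as the paper: intersect each rotation circle $C_{p_j}(\rp)$ with the $m$ edges of $P$ to obtain $O(m)$ angular intervals per point, sort the $O(nm)$ endpoints in $O(nm\log(nm))$ time, and sweep to find the angle of maximum coverage. The extra care you take with wrap-around intervals and the initialization of the sweep counter is a sensible refinement but does not change the approach or the bounds.
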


\subsection{An output-sensitive algorithm}
\label{sec:sensitive}

We now show that, performing a plane sweep using a \emph{sweeping
  circle} centered at~$\rp$ whose diameter increases continuously, it
is possible to intersect $P$ and the set of rotation circles in a more
efficient way.~The idea is to maintain a list of the edges
intersecting the \emph{sweeping-circle}, ordered by appearance along
the sweeping-circle.~Using the same technique shown in
Figure~\ref{fig:fixed_pc:in-out_events}, the edges are labeled as
defining in- or out- events.~The algorithm is outlined next.

\begin{enumerate}

\item \label{enum:sensitive:nm:0} \textbf{Normalize $P$}.~In the
  following steps, we consider~$P$ to have no edges intersecting any
  circle centered at $\rp$ more than once.~This can be guaranteed by
  performing a preprocessing step on $P$: For every edge~$e = uv$
  of~$P$, let $p_e$ be the intersection point between the line $\ell$
  containing $e$ and the line perpendicular to $\ell$ passing through
  $r$.~If $p_e$ belongs to the relative interior of~$e$, subdivide
  this edge into the edges $u p_e$~and~$p_e v$.~In the worst case,
  each edge of~$P$ gets subdivided into two parts. See
  Figure~\ref{fig:sensitive:mn:split_edge}.


  \begin{figure}[ht]
    \centering
    \includegraphics{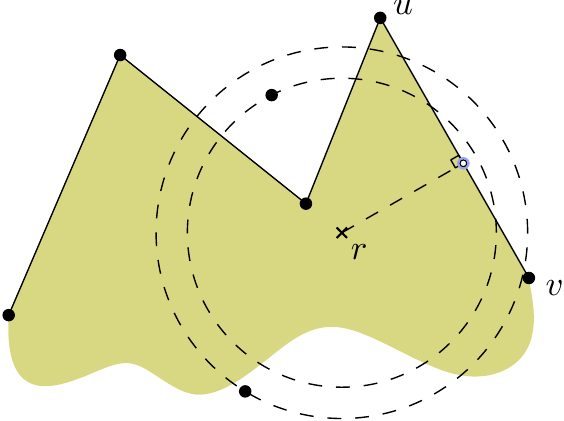}
    \caption{Splitting an edge of $P$.}
    \label{fig:sensitive:mn:split_edge}
  \end{figure}

\item \label{enum:sensitive:nm:1} \textbf{Process a vertex of $P$.}
  Sort first the vertices of $P$ and $S$ according to their distance
  from $\rp$.  This is the order in which an expanding sweeping circle
  centered at $\rp$ will reach them.

  \vspace{0.5em}
  As the sweeping-circle increases in size, we stop at each vertex
  $p_j$ of~$P$.~Each time this happens, the number of intersections of
  $C_{p_j}(\rp)$ with the boundary of~$P$ will increase or decrease by
  two.~We can maintain and update the ordered list of edges
  intersected by $C_{p_j}(\rp)$, using a red-black tree, in
  logarithmic time.~This enables us to calculate the intersections of
  $C_{p_j}(\rp)$ in time proportional to their number.~It suffices to
  walk along the ordered list of edges intersected by the
  sweeping-circle.~Each time the sweeping circle reaches an element of
  $S$, the number and order of intersections of the sweeping circle
  with the edges of $P$ remains unchanged.~However, since the points
  of intersection change, we need to recalculate them each time we
  reach a point of $P$ or $S$.

\item \label{enum:sensitive:nm:2} \textbf{Compute the intervals
    sequence for each element of $S$.} We can now compute, within the
  same time complexity, the intervals in which $C_{p_j}(\rp)$
  intersects the interior of $P$. Note that these intervals are not
  the elements of $\mathcal I_j$, they have to be rotated according to
  the position of $p_j$ with respect to $\rp$.

\item \textbf{Construct the events
    sequence.} \label{enum:sensitive:nm:3} Since for each point $p_j$
  in $S$ we have computed the corresponding sequence of sorted
  intervals $\mathcal I_j$, all we need to do is to merge these (at
  most $n$) sequences into a complete sequence of events.
\end{enumerate}

The normalization process takes $O(m)$ time.~Sorting the points in $S$
and the vertices of $P$ by distance from $r$ takes $O(n \log n)$ and
$O(m \log m)$ time, respectively. The ordered list of edges
intersecting the sweep-line is maintained in an $O(m)$-size red-black
tree, so we can process all the vertices of~$P$ in $O(m \log m)$
time.~On the other hand, processing all the points in $S$ takes $O(k)$
time, where~$k$ denotes the total number of in- and out-events in a
Fixed MCR problem.~Finally, merging the~$O(n)$ sequences of sorted
intervals takes $O(k \log n)$ time. We then sweep the merged list of
$\mathcal I_1 \cup \cdots \cup \mathcal I_n$ in $O(k)$ time to obtain
a solution to our problem.~The total time complexity
is~$O(n \log n + m \log m + k \log n)$.~The space complexity is
$O(n + m + k)$.~We have thus proved:

\begin{theorem}\label{lem:sensitive:nm}
  The Fixed MCR problem can be solved in $O((n+k) \log n + m \log m)$
  time and $O(n + m + k)$ space.
\end{theorem}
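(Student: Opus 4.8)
The plan is to establish the bounds by analyzing the expanding-circle plane sweep of Steps~\ref{enum:sensitive:nm:0}--\ref{enum:sensitive:nm:3}, which produces exactly the event set used by Step~\ref{enum:nmlognm:1} of the algorithm behind Theorem~\ref{thm:nmlognm}, but does so output-sensitively. First I would verify the normalization of Step~\ref{enum:sensitive:nm:0}: after splitting an edge $e$ at the foot $p_e$ of the perpendicular from $\rp$ to the line containing $e$, the distance from $\rp$ to a point moving along either sub-edge is strictly monotone, so each (sub-)edge meets every rotation circle $C_p(\rp)$ at most once. Hence each sub-edge $e'$ is crossed by the sweeping circle of radius $\rho$ exactly for $\rho$ in a single interval $[\rho_{\min}(e'),\rho_{\max}(e')]$, whose endpoints are distances $|\overline{\rp v}|$ for vertices $v$ of the subdivided polygon. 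Consequently the cyclic list $L(\rho)$ of sub-edges currently crossed, ordered by angular position on the sweeping circle, changes only when the circle passes such a vertex, and then it changes by $O(1)$ insertions/deletions (by two, in the non-degenerate case).

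Next I would account for the running time. Sorting the $n$ points of $S$ and the $O(m)$ vertices of the subdivided $P$ by distance from $\rp$ costs $O(n\log n)$ and $O(m\log m)$ time (Step~\ref{enum:sensitive:nm:1}). Maintaining $L(\rho)$ in a red-black tree keyed by angular position, we handle each of the $O(m)$ vertex events with $O(1)$ tree updates at $O(\log m)$ cost each, for $O(m\log m)$ total. When the sweeping circle reaches a point $p_j\in S$, $L(\rho)$ is combinatorially unchanged, but, since the angular coordinates of the crossing points vary with $\rho$, we recompute them and walk $L$ once to read off the maximal sub-arcs of $C_{p_j}(\rp)$ inside $P$; the length of $L$ at that moment equals the number $m_j$ of in-/out-events that $p_j$ contributes, so this costs $O(m_j)$, and $\sum_j m_j = k$ gives $O(k)$ over all of $S$ (Step~\ref{enum:sensitive:nm:2}). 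Shifting each of the $m_j$ resulting arcs by the fixed angular offset of $p_j$ about $\rp$ yields the true rotation intervals $\mathcal I_j$, already sorted, again in $O(k)$ total.

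Finally I would merge the (at most $n$) sorted sequences $\mathcal I_1,\dots,\mathcal I_n$ into one event sequence of length $O(k)$ with a heap, in $O(k\log n)$ time, and run the linear sweep of Step~\ref{enum:nmlognm:4}: carry a counter of points currently in $P$, adjust it by $\pm 1$ at each in-/out-event, and output an angle where it is maximal, in $O(k)$ time. Summing, the time is $O(m)+O(n\log n)+O(m\log m)+O(k)+O(k\log n)+O(k)=O((n+k)\log n+m\log m)$, while the sorted lists, the $O(m)$-node tree, the intervals $\mathcal I_j$, and the merged sequence occupy $O(n+m+k)$ space, which is the claim.

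I expect the crux to be the output-sensitivity argument in the second paragraph: one must argue precisely that between consecutive radial vertex-events the combinatorial list $L(\rho)$ is genuinely invariant — so that reaching a point of $S$ needs no structural update and only a geometric recomputation charged to the $m_j$ arcs $p_j$ actually generates — and, dually, that the $O(m)$ structural updates at vertices of $P$ cost nothing in terms of $n$. Degenerate configurations (several points of $S$ and/or vertices of $P$ equidistant from $\rp$, a vertex of $P$ on some $C_{p_j}(\rp)$, collinearities) require a consistent tie-breaking order on the radial events but do not change the asymptotics.
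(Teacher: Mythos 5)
Your proposal is correct and follows essentially the same route as the paper: normalize the edges so each meets every rotation circle at most once, sweep an expanding circle over the radially sorted vertices of $P$ and points of $S$, maintain the angularly ordered list of crossed edges in a red-black tree with $O(\log m)$ updates at the $O(m)$ vertex events, read off each $\mathcal I_j$ in time proportional to the $m_j$ events it contains, merge the $O(n)$ sorted sequences in $O(k\log n)$, and finish with a linear sweep. You in fact supply slightly more detail than the paper does on why the crossed-edge list is combinatorially invariant between vertex events and on degenerate tie-breaking, but the decomposition and the accounting are the same.
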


\section{Segment-restricted MCR (Problem~\ref{pro:intro:segment_mcr})}
\label{sec:segment_restricted_mcr}

Our approach to solve Problem~\ref{pro:intro:segment_mcr} is to
characterize, for each $p$ in $S$, the intersection between the
polygon~$P$ and the rotation circle~$C_p(\rp)$ while the center $\rp$
of $C_p(\rp)$ moves along a line segment $\ell = \overline{ab}$
from~$a$ to~$b$. For simplicity, we assume that $a$ lies on the
origin~$(0,0)$ and $b$ on the positive $x$-axis.  For each edge
$e = \overline{uv}$ of $P$, we parameterize the intersection
between~$C_p(\rp)$ and~$e$ using a function~$\omega = f(x)$, where $x$
is the $x$-coordinate of $\rp$ (ranging from $0$ to the
$x$-coordinate~$b.x$ of $b$) and $\omega$ is the counterclockwise
angle swept by the ray~$\overrightarrow{\rp p}$ until it coincides
with the ray emanating from $\rp$ and passing through the current
point of intersection~$q$ of $C_p(\rp)$ and~$e$ (assume for the moment
that there exists exactly one such point of intersection). See
Figure~\ref{fig:restr_mcr:fig9}.

Leaving the details for
%
%
Section~\ref{appendix}, we obtain the following expression of~$\omega$
as a function of~$x$:

\begin{equation}\label{eq:final_eq_omega}
  \omega \ = \  \arccos \left(
    \frac{\gamma(x) \pm \sqrt{\delta(x)}}{\epsilon(x)}
  \right),
\end{equation}
where $\gamma(x)$, $\delta(x)$, and $\epsilon(x)$ are polynomials of
degrees $2$, $4$, and $2$, respectively.~The motion of~$\rp$
along~$\ell$ thus corresponds to a set of points~$(x,\omega)$ for
which~$p$ hits the boundary of $P$.~For each point~$p \in S$, these
points form $O(m)$ curves bounding a collection of simple regions in
the $x$-$\omega$ plane; each point~$(x,\omega)$ of any such region
corresponds to a rotation of $p$, by a counterclockwise angle of
size~$\omega$ with respect to a rotation center at $(x,0)$, for which
$p$ belongs to $P$.~Note that each pair of such regions have disjoint
interiors, whereas their boundaries may intersect at most at a common
vertex due to the simplicity of $P$.

\subsection{Subdividing the Edges of the Polygon}

We mentioned earlier that, for convenience, we subdivide the edges of
the polygon~$P$ about their points of intersection (if any) with the
$x$-axis; so, in the following, we assume that each edge has no points
on either side of the $x$-axis.~We further subdivide the edges in
order to simplify the computation of the angle~$\omega$ in terms of
the $x$-coordinate of the rotation center~$\rp$ as it moves along the
segment~$\overline{a b}$.

\bigskip\noindent
\textbf{Theoretical Framework.}  \quad Let us consider that we process
the point~$p \in S$, and
denote by $D_p(\rp)$ the closed disk bounded by $C_p(\rp)$, where
$\rp$ is a point in $\overline{a b}$.~In Figure~\ref{fig:fig_br1}, $p$
is taken to lie above the $x$-axis where either $a.x \le p.x \le b.x$
(top figure) or $b.x < p.x$ (bottom figure).~The cases where
$p.x < a.x$ or where $p$ lies below the $x$-axis are symmetric,
whereas the case where $p$ lies on the $x$-axis is similar (see
figures~\ref{fig:fig_unique_angle1}~and~\ref{fig:fig_unique_angle2}).~Moreover,
let $p'$ be the mirror image of $p$ with respect to the $x$-axis;
clearly, $p'$ coincides with $p$ if $p$ lies on the $x$-axis.
Finally, let $H_p^L$ ($H_p^R$, resp.) be the open halfplane to the
left (right, resp.) of the line perpendicular to the $x$-axis that
passes through~$p$.

Then, it is useful to observe the following properties.

\begin{lemma} \label{lemma:circle_union} Let $p$ be a point, and let
  $H_p^L$, $H_p^R$, $C_p(r)$, and $D_p(r)$, for
  $r \in \overline{a b}$, be as defined above.
  \begin{itemize}
  \item[(i)] Consider any two points $r, r' \in \overline{a b}$ with
    $r \ne r'$.  If the point~$p$ lies on the $x$-axis, then the
    circles $C_p(r), C_p(r')$ intersect only at $p$.  If the point~$p$
    does not lie on the $x$-axis, the circles $C_p(r), C_p(r')$
    intersect at $p$ and at $p$'s mirror image~$p'$ about the
    $x$-axis, and the line segment~$\overline{p p'}$ belongs to both
    $D_p(r), D_p(r')$.
  \item[(ii)]
    \begin{itemize}
    \item[$\triangleright$\,] For every point~$s$ in the interior of
      $H_p^L \cap D_p(r)$, there exists a unique circle centered on
      the $x$-axis that passes from $p$ and $s$ and its center lies to
      the right of $r$;
    \item[$\triangleright$\,] for every point~$t$ in $H_p^L - D_p(r)$,
      there exists a unique circle centered on the $x$-axis that
      passes from $p$ and $t$ and its center lies to the left of $r$.
    \end{itemize}
    Symmetrically,
    \begin{itemize}
    \item[$\triangleright$\,] for every point~$s'$ in the interior of
      $H_p^R \cap D_p(r)$, there exists a unique circle centered on
      the $x$-axis that passes from $p$ and $s'$ and its center lies
      to the left of $r$;
    \item[$\triangleright$\,] for every point~$t'$ in
      $H_p^R - D_p(r)$, there exists a unique circle centered on the
      $x$-axis that passes from $p$ and $t'$ and its center lies to
      the right of $r$.
    \end{itemize}
  \end{itemize}
\end{lemma}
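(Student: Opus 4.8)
The plan is to treat the two parts independently.

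For part (i), I would first observe that $C_p(r)$ and $C_p(r')$ are \emph{distinct} circles, since equal circles would have equal centers and force $r = r'$; two distinct circles meet in at most two points, and the line through their centers is the $x$-axis. If $p$ lies on the $x$-axis, then $p$ lies on this line of centers, so the tangent to each of $C_p(r)$, $C_p(r')$ at $p$ is the common perpendicular to the $x$-axis through $p$; having a common tangent at $p$, the two circles are tangent there, so $p$ is their only common point. If $p$ does not lie on the $x$-axis, then each circle is symmetric about the $x$-axis (its center lies on it), hence also contains the mirror image $p'$ of $p$; since $p \ne p'$, the pair $\{p,p'\}$ already exhausts the at most two intersection points, so $C_p(r) \cap C_p(r') = \{p,p'\}$. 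Finally, $\overline{p p'}$ is a chord of $C_p(r)$ and $D_p(r)$ is convex, so $\overline{p p'} \subseteq D_p(r)$; the same argument gives $\overline{p p'} \subseteq D_p(r')$.

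For part (ii), all four bullets reduce to a single elementary computation together with a sign bookkeeping. Fix a point $q$ with $q.x \ne p.x$; the centers equidistant from $p$ and $q$ form the perpendicular bisector of $\overline{p q}$, which is non-horizontal (as $\overline{pq}$ is non-vertical) and thus meets the $x$-axis in exactly one point $(\xi(q),0)$. Solving $|(\xi,0) - p|^2 = |(\xi,0) - q|^2$ yields $\xi(q) = \frac{|q|^2 - |p|^2}{2\,(q.x - p.x)}$, and the resulting circle has positive radius (it would degenerate only if $q = p$); this gives existence and uniqueness. Since every point of $H_p^L$ has $q.x < p.x$ and every point of $H_p^R$ has $q.x > p.x$, the circle exists throughout both halfplanes. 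To locate the center relative to $r$, expand $q \in D_p(r) \iff |q - r|^2 \le |p - r|^2 \iff |q|^2 - |p|^2 \le 2\,(r.x)\,(q.x - p.x)$ and divide by $2(q.x - p.x)$, which is \emph{negative} on $H_p^L$ and \emph{positive} on $H_p^R$. Hence, for $q \in H_p^L$: $q$ lies in the interior of $D_p(r)$ iff $\xi(q) > r.x$ (center to the right of $r$), and $q \in H_p^L - D_p(r)$ iff $\xi(q) < r.x$ (center to the left of $r$); for $q \in H_p^R$ the two conclusions are exchanged. These are precisely the four claimed statements.

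There is no serious obstacle here; the care needed is (a) in part (i), isolating the degenerate sub-case where $p$ lies on the $x$-axis so that the conclusion ``intersect only at $p$'' is literally correct, and (b) in part (ii), tracking strict versus non-strict inequalities and, above all, the reversal of the inequality caused by the sign change of the denominator $q.x - p.x$ between $H_p^L$ and $H_p^R$, together with ruling out the radius-zero case so that ``unique circle'' holds verbatim.
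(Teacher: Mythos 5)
Your proof is correct, but it takes a genuinely different route from the paper's, most notably in part (ii). For part (i) the paper argues via perpendicular bisectors (a second common point off the axis would force the $x$-axis to be the bisector of $\overline{p p'}$, collapsing $p'$ onto $p$ when $p$ is on the axis), whereas you use the common vertical tangent at $p$ when $p$ lies on the line of centers and the reflection symmetry of the circles when it does not; both are elementary and complete. For part (ii) the divergence is substantial: the paper proceeds synthetically, introducing the second intersection point $q$ of the line through $p$ and $s$ with $C_p(r)$, comparing parallel perpendicular bisectors of $\overline{q p}$ and $\overline{s p}$, and then splitting the exterior case $t \in H_p^L - D_p(r)$ into two sub-cases according to whether $t$ lies in the tangent halfplane $T_p(r)$, the second sub-case resolved by an obtuse-angle argument. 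You instead compute the center's abscissa explicitly, $\xi(q) = \frac{|q|^2 - |p|^2}{2\,(q.x - p.x)}$, and observe that $q \in D_p(r)$ is equivalent to an inequality whose division by the signed quantity $2(q.x - p.x)$ yields all four bullets at once, with the left/right dichotomy coming precisely from the sign flip between $H_p^L$ and $H_p^R$. Your computation is uniform over all positions of $p$ (on or off the axis) and over interior versus exterior points, eliminating the paper's case analysis and making the strict-inequality bookkeeping transparent; the paper's synthetic version, in exchange, introduces the bisector and tangent-halfplane pictures that it reuses in Corollary~\ref{corol:circle_union} and the subsequent subdivision arguments.
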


\begin{proof}~
  \paragraph{(i)} From the definition of the circles $C_p(r)$ for all
  $r \in \overline{a b}$, $p$ belongs to each such circle.

  Next, assume that $p$ lies on the $x$-axis and suppose for
  contradiction that two circles $C_p(r), C_p(r')$ with $r \ne r'$
  intersect at a point~$p' \ne p$ as well.  Then, both $r, r'$ would
  belong to the perpendicular bisector of the line
  segment~$\overline{p p'}$; thus, the perpendicular bisector should
  coincide with the $x$-axis.  Then, since $p$ lies on the $x$-axis,
  $p'$ would coincide with $p$, in contradiction to the assumption
  that $p' \ne p$.  Therefore, if $p$ lies on the $x$-axis, any two
  circles $C_p(r), C_p(r')$ with $r \ne r'$ intersect only at $p$.

  Now, assume that $p$ does not lie on the $x$-axis.  Then, since $p'$
  is the mirror image of $p$ with respect to the $x$-axis, the
  $x$-axis is the perpendicular bisector of the
  segment~$\overline{p p'}$. Thus, $p'$ belongs to all the circles
  centered on the $x$-axis that pass from $p$.  The fact that
  $\overline{p p'}$ belongs to each of the disks~$D_p(\rp)$, for all
  $\rp \in \overline{a b}$, follows from the fact that each
  disk~$D_p(\rp)$ is a convex set containing $p$ and $p'$.

  \begin{figure}[t]
    \begin{center}
      \includegraphics[height=4.8cm]{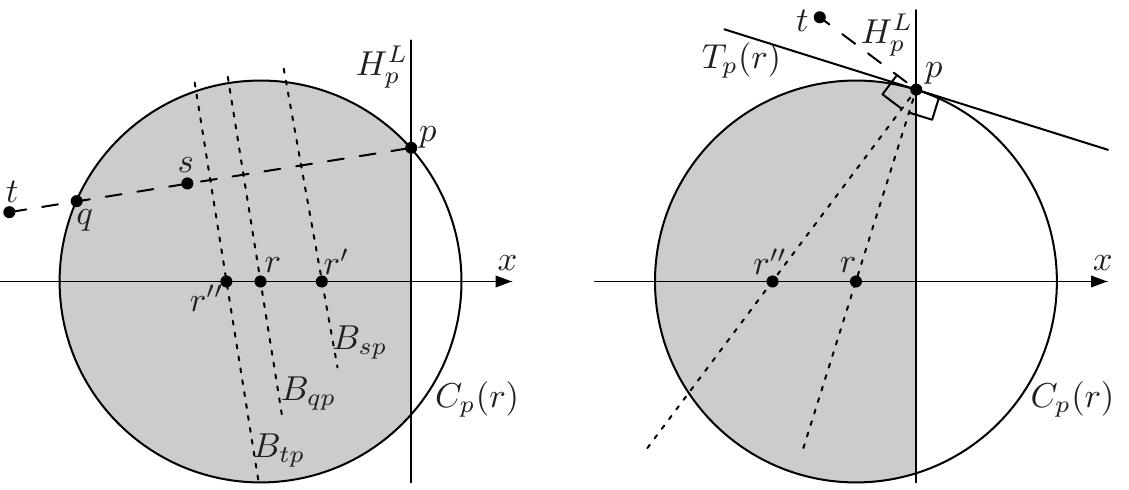}
      \caption{For the proof of Lemma~\ref{lemma:circle_union}.  \
        (left)~The perpendicular bisectors $B_{sp}$, $B_{qp}$,
        $B_{tp}$ intersect the $x$-axis at points $r', r, r''$,
        respectively.  \ (right)~The lines through $p$ that are
        perpendicular to the tangent at $p$ and to $\overline{t p}$
        intersect the $x$-axis at points $r, r''$, respectively.}
      \label{fig:fig_lemma1}
    \end{center}
  \end{figure}

  \paragraph{(ii)} Let $q$ be the point of intersection of $C_p(r)$
  with the line~$L$ through $p$ and~$s$; see
  Figure~\ref{fig:fig_lemma1}(left).~The line~$L$ is well defined
  since $s \ne p$.  In fact, $s.x < p.x$ (because $s$ belongs to
  $H_p^L$), and thus $L$ is not perpendicular to the $x$-axis, which
  implies that the perpendicular bisector~$B_{qp}$ of the line
  segment~$\overline{q p}$ intersects the $x$-axis at a single point;
  this point of intersection is precisely the center~$r$ of $C_p(r)$.
  Since the perpendicular bisector of the line
  segment~$\overline{s p}$ is parallel to $B_{qp}$ and lies to the
  right of $B_{qp}$ (because $s$ is an interior point of
  $\overline{q p}$), it intersects the $x$-axis at a single point~$r'$
  to the right of $r$; $r'$ is the center of the circle centered on
  the $x$-axis that passes from $p$ and $s$.

  Now, consider $t \in H_p^L - D_p(r)$, and let $T_p(r)$ be the open
  halfplane that is tangent to the circle~$C_p(r)$ at $p$ and contains
  $r$.  If $t \in T_p(r)$, then the line~$L$ through $p$ and $t$
  intersects $C_p(r)$ at $p$ and at another point~$q$, and
  $q \in \overline{t p}$. Then, as above, the perpendicular
  bisector~$B_{q p}$ of $\overline{q p}$ intersects the $x$-axis at
  $r$, whereas the perpendicular bisector of $\overline{t p}$ is
  parallel and to the left of $B_{q p}$ (since $q$ is an interior
  point of $\overline{t p}$), and thus intersects the $x$-axis at a
  point~$r''$ to the left of~$r$; see
  Figure~\ref{fig:fig_lemma1}(left).  It is important to observe that
  the proof so far applies no matter whether $p$ lies on the $x$-axis
  or not.

  Next, let us consider the case in which $t \not\in T_p(r)$; this
  case is not possible if $p$ lies on the $x$-axis since then
  $T_p(r) = H_p^L$.  Then, the line through $p$ perpendicular to the
  tangent to the circle~$C_p(r)$ at $p$ intersects the $x$-axis at
  $r$.  Since $t \not\in T_p(r)$, the line perpendicular to the line
  through $t$ and $p$ is not parallel to the $x$-axis and thus
  intersects the $x$-axis at a single point~$r''$.  In fact, since the
  angle $\widehat{t p r}$ of the triangle with $t, p, r$ as vertices
  is larger than $\pi/2$, $r''$ is to the left of $r$; see
  Figure~\ref{fig:fig_lemma1}(right).

  The results for points $s'$ in the interior of $H_p^R \cap D_p(r)$
  and $t' \in H_p^R - D_p(r)$ are obtained in a fashion left-to-right
  symmetric to the one we used in order to obtain the results for the
  points $s$ in the interior of $H_p^L \cap D_p(r)$ and
  $t \in H_p^L - D_p(r)$, respectively.
\end{proof}

Statement~(ii) of Lemma~\ref{lemma:circle_union} directly implies that
the union of all the circles~$C_p(\rp)$ forms precisely the closure of
the symmetric difference $D_p(a) \oplus D_p(b)$ of the disks $D_p(a)$
and $D_p(b)$ centered at $a$ and $b$, respectively (see
Figure~\ref{fig:fig_br1}); note that any point in the interior of
\[\Bigl( \bigl( D_p(a) - D_p(b) \bigr) \cap H_p^L \Bigr)
  \cup \Bigl( \bigl(D_p(b) - D_p(a) \bigr) \cap H_p^R \Bigr)\] lies on
a circle~$C_p(r)$ with $r$ in the interior of $\overline{a b}$,
whereas no other point does so.~Lemma~\ref{lemma:circle_union}(ii)
also implies the following corollary.

\begin{corollary} \label{corol:circle_union}~
  \begin{itemize}
  \item[(i)] For any $r, r' \in \overline{a b}$ with $r$ to the left
    of $r'$:
    \begin{itemize}
    \item[$\triangleright$\,]
      $\bigl( C_p(r) \cap D_p(r') \bigr) \cap H_p^L \ =\ \emptyset$ \
      \ and \ \ $D_p(r') \cap H_p^L \ \subset\ D_p(r) \cap H_p^L$;
    \item[$\triangleright$\,]
      $\bigl( C_p(r') \cap D_p(r) \bigr) \cap H_p^R \ =\ \emptyset$ \
      \ and \ \ $D_p(r) \cap H_p^R \ \subset\ D_p(r') \cap H_p^R$.
    \end{itemize}
  \item[(ii)] Suppose that a line segment~$I$ intersects a
    circle~$C_p(r)$, where $r \in \overline{a b}$, at points
    $w_1, w_2$ such that the line segment~$\overline{w_1 w_2}$ lies
    entirely in the closure of $\bigl( D_p(a) - D_p(b) \bigr)$.  Then,
    the segment~$I$ is tangent to a circle~$C_p(r')$ for some
    $r' \in \overline{a b}$ and the point of tangency belongs to
    $\overline{w_1 w_2}$.  Symmetrically, the same result holds if the
    segment~$\overline{w_1 w_2}$ lies entirely in the closure of
    $\bigl( D_p(b) - D_p(a) \bigr)$.
  \end{itemize}
\end{corollary}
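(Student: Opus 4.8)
The plan is to deduce both parts of the corollary from Lemma~\ref{lemma:circle_union}(ii). For part~(i), I would argue by contradiction using the ``left/right of $r$'' classification of centers. Suppose $r$ is to the left of $r'$ and let $w$ be a point of $\bigl(C_p(r) \cap D_p(r')\bigr) \cap H_p^L$. If $w$ lies strictly inside $D_p(r')$, then applying Lemma~\ref{lemma:circle_union}(ii) to the point $s = w \in H_p^L \cap D_p(r')$ (interior), there is a unique circle centered on the $x$-axis through $p$ and $w$ with center to the right of $r'$; but $C_p(r)$ is such a circle and $r$ is to the left of $r'$, a contradiction. The boundary case $w \in C_p(r')$ would force $C_p(r)$ and $C_p(r')$ to share the point $w \in H_p^L$, which contradicts Lemma~\ref{lemma:circle_union}(i) since $w \ne p$ and $w \ne p'$ (as $p' \notin H_p^L$ when $p$ is off the axis, and $w \ne p$ because $w \in H_p^L$ is impossible for $w = p$; when $p$ is on the axis the circles meet only at $p$). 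Hence $\bigl(C_p(r) \cap D_p(r')\bigr) \cap H_p^L = \emptyset$. The nesting $D_p(r') \cap H_p^L \subset D_p(r) \cap H_p^L$ then follows: any point $s$ in the interior of $H_p^L \cap D_p(r')$ lies on a circle $C_p(\rho)$ with center $\rho$ to the right of $r'$, hence to the right of $r$, so by the first bullet's contrapositive $s$ is in the interior of $D_p(r)$; a limiting/boundary argument handles points of $C_p(r') \cap H_p^L$, and $p$ itself is common to all disks. The $H_p^R$ statements are obtained by the left-to-right mirror symmetry already invoked in the lemma's proof.

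For part~(ii), suppose the segment $I$ meets $C_p(r)$ at $w_1, w_2$ with $\overline{w_1 w_2}$ contained in the closure of $D_p(a) - D_p(b)$. First I would reduce to the case where the relative interior of $\overline{w_1 w_2}$ lies in the \emph{interior} of $D_p(a) - D_p(b)$, i.e., strictly inside $D_p(a)$ and strictly outside $D_p(b)$; the remarks following Lemma~\ref{lemma:circle_union} identify this interior, up to the halfplane split, as exactly the set of points lying on circles $C_p(\rho)$ with $\rho$ in the open segment $\overline{a b}$. The key move is a continuity/intermediate-value argument: as $\rho$ runs from $a$ to the position of $r$ (and beyond, toward $b$), the circle $C_p(\rho)$ sweeps across the chord, so the function measuring the signed position of $\overline{w_1 w_2}$ relative to $C_p(\rho)$ changes sign, and at the crossover the chord is tangent to some $C_p(\rho_0)$. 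To make this precise I would fix the portion of $\overline{w_1 w_2}$ inside $H_p^L$ (the $H_p^R$ portion being symmetric), use Corollary~(i) to see that the disks $D_p(\rho) \cap H_p^L$ are monotonically nested as $\rho$ moves right, and conclude that there is a largest $\rho_0 \le r$ for which $\overline{w_1 w_2}$ still meets $D_p(\rho_0)$ only in its boundary — at that $\rho_0$ the segment is tangent to $C_p(\rho_0)$ with point of tangency on $\overline{w_1 w_2}$. One must also check $\rho_0 \in \overline{a b}$: since $\overline{w_1 w_2} \subseteq D_p(a)$, the value $\rho = a$ already has the chord inside or on $D_p(a)$, pinning $\rho_0$ between $a$ and $r$. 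The symmetric statement for $\overline{w_1 w_2} \subseteq \mathrm{cl}\bigl(D_p(b) - D_p(a)\bigr)$ follows by swapping the roles of $a$ and $b$ and of $H_p^L$ and $H_p^R$.

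The main obstacle I anticipate is part~(ii): handling the boundary/degenerate configurations cleanly. In particular I need to rule out (or accommodate) the case where $\overline{w_1 w_2}$ touches $C_p(a)$ or $C_p(b)$, where one of $w_1, w_2$ lies on the $x$-axis so that the $H_p^L$/$H_p^R$ split degenerates, and where the tangency point could a priori coincide with an endpoint $w_i$ rather than lying in the relative interior — the statement as phrased only claims the point of tangency belongs to the \emph{closed} segment $\overline{w_1 w_2}$, so this last possibility is in fact permitted, which simplifies matters. The continuity argument itself is routine once the monotone-nesting from Corollary~(i) is in hand, so the real work is bookkeeping the halfplane decomposition and the endpoints of the parameter interval for $\rho$.
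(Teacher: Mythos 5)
Your treatment of part~(i) is correct and is essentially the paper's own argument: both derive the two claims from Lemma~\ref{lemma:circle_union}(ii) (you via its contrapositive, the paper directly), and both use Lemma~\ref{lemma:circle_union}(i) to kill the boundary case because $p$ and $p'$ lie on the vertical line through $p$ and hence outside the open halfplane $H_p^L$.

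Part~(ii), however, contains a genuine error: your sweep runs in the wrong direction. By part~(i), the sets $D_p(\rho) \cap H_p^L$ \emph{shrink} as $\rho$ moves to the right. Since $\overline{w_1 w_2}$ is an inscribed chord of $C_p(r)$, for every $\rho$ to the \emph{left} of $r$ the chord lies strictly in the interior of the (larger) set $D_p(\rho) \cap H_p^L$, and at $\rho = r$ it meets the interior of $D_p(r)$ as well; so there is no $\rho_0 \le r$ at which $\overline{w_1 w_2}$ meets $D_p(\rho_0)$ only in its boundary, and the ``largest $\rho_0 \le r$'' you invoke does not exist. The tangency arises as $\rho$ moves from $r$ \emph{toward $b$}: the intersection $D_p(\rho) \cap I$, contained in $\overline{w_1 w_2}$ for $\rho$ just to the right of $r$, shrinks continuously and degenerates to a single point at some $r' \in \overline{r b}$ --- this is exactly what the paper does. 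A concrete check: take $p = (0,1)$, $a = (0,0)$, $b = (10,0)$, $r = a$, and let $I$ be the vertical chord $x = -1/2$ of the unit circle $C_p(a)$; this chord lies in the closure of $D_p(a) - D_p(b)$, and the unique tangent circle of the family is centered at $(3/4, 0)$, strictly to the right of $r$. The error propagates into your justification that $\rho_0 \in \overline{a b}$: you ``pin'' $\rho_0$ using only $\overline{w_1 w_2} \subseteq D_p(a)$ and never use the hypothesis that $\overline{w_1 w_2}$ avoids the interior of $D_p(b)$. That hypothesis is precisely what a correct proof needs --- it guarantees that by the time $\rho$ reaches $b$ the intersection $D_p(\rho) \cap \overline{w_1 w_2}$ has already collapsed, so the tangency center cannot escape past $b$. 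Without it the conclusion is false (a chord of $C_p(r)$ lying inside $D_p(b) \cap H_p^L$ has its tangent circle centered to the right of $b$, outside $\overline{a b}$), so an argument that never invokes it cannot be repaired by bookkeeping alone.
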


\begin{proof}~
  \paragraph{(i)} We prove the propositions for the halfplane~$H_p^L$;
  the proofs for $H_p^R$ are left-to-right symmetric.

  Since $r$ is to the left of $r'$, Lemma~\ref{lemma:circle_union}(ii)
  implies that $C_p(r') \cap H_p^L$ lies in the interior of
  $D_p(r) \cap H_p^L$.~This in turn implies that
  (i)~$\bigl( C_p(r) \cap H_p^L \bigr) \cap \bigl( D_p(r') \cap H_p^L
  \bigr) = \emptyset$, i.e.,
  $\bigl( C_p(r) \cap D_p(r') \bigr) \cap H_p^L = \emptyset$, and
  (ii)~$\bigl( D_p(r') \cap H_p^L \bigr) \subset \bigl( D_p(r) \cap
  H_p^L \bigr)$ since the disk~$D_p(r')$ is bounded by $C_p(r')$ and
  since each such disk is a convex set; we have a proper subset
  relation because the points in $C_p(r) \cap H_p^L$ do not belong to
  $D_p(r') \cap H_p^L$.

  \paragraph{(ii)} Below, we prove the statement for the case that
  $\overline{w_1 w_2}$ lies entirely in the closure of
  $\bigl( D_p(a) - D_p(b) \bigr)$; the proof for the case that
  $\overline{w_1 w_2} \in \hbox{closure} \bigl( D_p(b) - D_p(a)
  \bigr)$ is left-to-right symmetric.

  Since $w_1 \ne w_2$ and
  $\overline{w_1 w_2} \in \hbox{closure} \bigl( D_p(a) - D_p(b)
  \bigr)$, then $r \ne b$; let $t \in \overline{a b}$ be a point
  infinitesimally to the right of $r$.  Then, according to
  statement~(i),
  $\bigl( C_p(r) \cap D_p(t) \bigr) \cap H_p^L = \emptyset$ and
  $\bigl( D_p(t) \cap H_p^L \bigr) \subset \bigl( D_p(r) \cap H_p^L
  \bigr)$, which together imply that
  $\bigl( D_p(t) \cap I \bigr) \subset \overline{w_1 w_2}$; note that
  at least one of $w_1, w_2$ (which belong to $C_p(r)$) belongs to
  $H_p^L$, for otherwise, either $\overline{w_1 w_2}$ degenerates to a
  single point, in contradiction to the fact that $w_1 \ne w_2$, or
  $\overline{w_1 w_2} = \overline{p p'}$ with $p \ne p'$, in
  contradiction to the fact that $\overline{w_1 w_2}$ lies entirely in
  the closure of $\bigl( D_p(a) - D_p(b) \bigr)$.  Since the rotation
  center moves continuously along $\overline{a b}$ there exists a
  point~$r' \in \overline{r b}$ such that $D_p(r') \cap I$ is a single
  point, i.e., the line segment~$I$ is tangent to the
  circle~$C_p(r')$; moreover, since
  $D_p(r') \cap I \subset \overline{w_1 w_2}$, the point of tangency
  belongs to the line segment~$\overline{w_1 w_2}$. 
\end{proof}

\begin{figure}[t]
  \begin{center}
    \includegraphics[height=7.5cm]{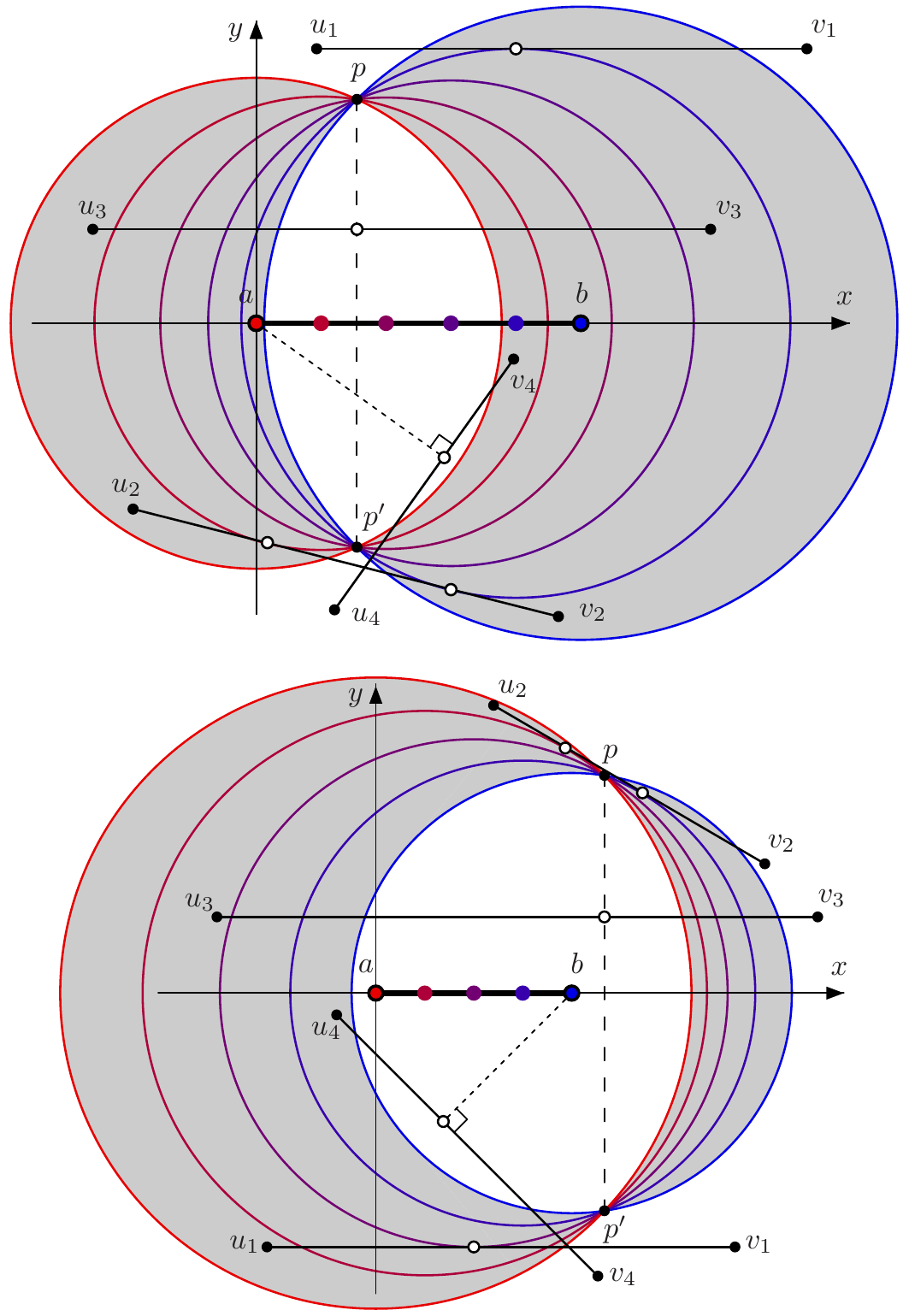}
    \caption{Subdividing the polygon edges so that each sub-edge is
      intersected at most once by each of the circles~$C_p(r)$ (white
      disks denote points of edge subdivision).}
    \label{fig:fig_br1}
  \end{center}
\end{figure}

\bigskip\noindent
\textbf{The Subdivision Procedure.} \quad Our subdivision procedure
for the polygon edges while processing point $p \in S$ works in two
phases: in Phase~1, we ensure that each circle~$C_p(\rp)$ intersects
each resulting sub-edge in at most one point; in Phase~2, we ensure
that for each sub-edge either $0 \le \omega \le \pi$ or
$\pi \le \omega \le 2 \pi$ implying that the value of $\omega$ is
uniquely determined from the value of its cosine.

\medskip\noindent
\textit{Phase~1:}
\ If an edge~$\overline{u v}$ of the polygon~$P$ does not intersect
$D_p(a) \cup D_p(b)$ or if at least one of its endpoints belongs to
$D_p(a) \cap D_p(b)$, then we need not do anything, otherwise:
\begin{itemize}
\item If $\overline{u v}$ does not intersect the interior of
  $D_p(a) \cap D_p(b)$, then $\overline{u v}$ is tangent to at most
  two of the circles $C_p(\rp)$ and we subdivide it at these points of
  tangency; see edges $\overline{u_1 v_1}$ and $\overline{u_2 v_2}$ in
  Figure~\ref{fig:fig_br1}.
\item If $\overline{u v}$ intersects the interior of
  $D_p(a) \cap D_p(b)$, then it crosses $D_p(a) \cap D_p(b)$.  If
  $\overline{u v}$ intersects the segment~$\overline{p p'}$, then we
  subdivide $\overline{u v}$ at its point of intersection with
  $\overline{p p'}$ (see edge~$\overline{u_3 v_3}$ in
  Figure~\ref{fig:fig_br1}); if not, then the points of intersection
  of $\overline{u v}$ with the boundary of $D_p(a) \cap D_p(b)$ both
  belong to either $C_p(a)$ or $C_p(b)$ (see edge~$\overline{u_4 v_4}$
  in Figure~\ref{fig:fig_br1}), in which case we subdivide
  $\overline{u v}$ at its closest point to $a$ or $b$, respectively.
\end{itemize}
It is not difficult to see that if the edge~$\overline{u v}$ has two
points of intersection with a circle~$C_p(\rp)$, these two points of
intersection end up belonging to different parts of the subdivided
edge.

After Phase~1 has been complete, we apply Phase~2 on the resulting
sub-edges.  Let $a'$ and $b'$ be points such that $a$ and $b$ are the
midpoints of segments $\overline{p a'}$ and $\overline{p b'}$,
respectively; see Figure~\ref{fig:fig_br2}.~Then, Phase~2 involves the
following subdivision steps.

\medskip\noindent
\textit{Phase~2:}
\begin{itemize}
\item If a sub-edge intersects $\overline{a' b'}$, we subdivide it at
  this point of intersection (in Figure~\ref{fig:fig_br2}, see
  sub-edges $\overline{u_1 v_1}$ and sub-edge~$\overline{u_2 v_2}$ in
  the top figure).
\item Additionally, if the sub-edge is tangent to two circles, we
  subdivide it at its point of intersection with the line through $p$
  perpendicular to the $x$-axis (see sub-edges $\overline{u_2 v_2}$ in
  Figure~\ref{fig:fig_br2}).
\end{itemize}

\begin{figure}[t]
  \begin{center}
    \includegraphics[height=7.5cm]{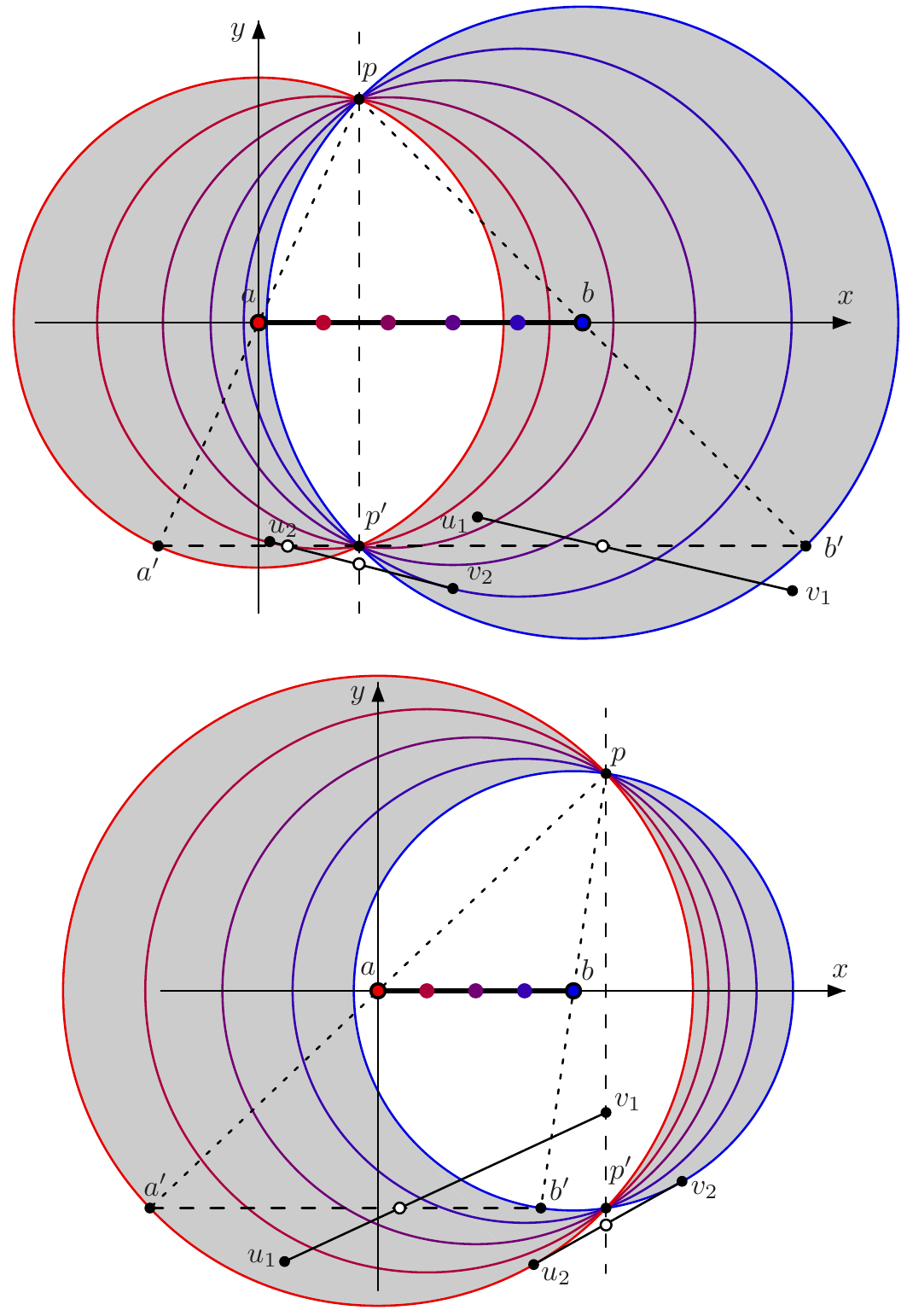}
    \caption{Further subdividing the polygon edges so that the
      angle~$\omega$ belongs either to $[0, \pi]$ or to $[\pi, 2 \pi]$
      (white disks denote points of edge subdivision).}
    \label{fig:fig_br2}
  \end{center}
\end{figure}

\bigskip
By taking into account that each of Phase~1 and Phase~2 may introduce
at most two subdivision points on a polygon edge, we conclude that
each edge ends up subdivided into at most $5$ sub-edges.

Finally, it is important to note that the above described edge
subdivision is introduced precisely for the processing of the current
point~$p \in S$ being processed; that is, for the next element of $S$,
we ignore the subdivision points introduced and start working again
with the edges of the polygon~$P$ (subdivided only about the
$x$-axis).

\bigskip\noindent
\textbf{Correctness.}
\quad Before proving Theorem~\ref{thm:edge_subdiv} which establishes
the correctness of the subdivision procedure, we show the following
useful lemma.

\begin{lemma} \label{lemma:phase2} Let $p$ be an element of the point
  set~$S$ and $p'$ be the mirror image of $p$ with respect to the
  $x$-axis.
  \begin{itemize}
  \item[(i)] If the point~$p$ is such that $0 = a.x \le p.x \le b.x$,
    then $p'$ belongs to the line segment~$\overline{a' b'}$.
  \item[(ii)] For any point~$q \in \overline{a' b'}$ such that
    $q \ne p'$, there is a point $\rp \in \overline{a b}$ for which
    $C_p(\rp)$ has the segment $\overline{q p}$ as its diameter.
  \end{itemize}
\end{lemma}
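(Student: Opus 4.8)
The plan is to pass to coordinates in the frame already fixed for this section, namely $a=(0,0)$ and $b=(b.x,0)$ with $b.x>0$, so that the points $\rp\in\overline{ab}$ are exactly those of the form $(x,0)$ with $0\le x\le b.x$. Write $p=(p.x,p.y)$, hence $p'=(p.x,-p.y)$. Since $a$ is the midpoint of $\overline{p\,a'}$ and $b$ is the midpoint of $\overline{p\,b'}$, we have $a'=2a-p=(-p.x,-p.y)$ and $b'=2b-p=(2b.x-p.x,-p.y)$. The first thing to record is that $a'$, $b'$ and $p'$ share the $y$-coordinate $-p.y$, so the segment $\overline{a'b'}$ (nondegenerate because $b.x>0$) lies on the horizontal line $y=-p.y$, and every $q\in\overline{a'b'}$ can be written $q=(q.x,-p.y)$ with $q.x\in[-p.x,\,2b.x-p.x]$.

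For part~(i): since $p'$ lies on the line $y=-p.y$, it belongs to $\overline{a'b'}$ if and only if its $x$-coordinate lies between those of $a'$ and $b'$, i.e.\ $-p.x\le p.x\le 2b.x-p.x$. Cancelling, these two inequalities read $0\le p.x$ and $p.x\le b.x$, which is precisely the hypothesis $0=a.x\le p.x\le b.x$. Hence $p'\in\overline{a'b'}$.

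For part~(ii): the point to keep in mind is that, by definition, $C_p(\rp)$ is the circle with center $\rp$ and radius $|\overline{\rp p}|$; therefore $C_p(\rp)$ has $\overline{qp}$ as a diameter exactly when $\rp$ is the midpoint of $\overline{qp}$ (the radius then being automatically $\tfrac12|\overline{qp}|$). So it suffices to exhibit the midpoint of $\overline{qp}$ inside $\overline{ab}$. That midpoint is $\bigl(\tfrac{q.x+p.x}{2},\,\tfrac{-p.y+p.y}{2}\bigr)=\bigl(\tfrac{q.x+p.x}{2},\,0\bigr)$, which lies on the $x$-axis; and from $q.x\in[-p.x,\,2b.x-p.x]$ we get $q.x+p.x\in[0,\,2b.x]$, so $\tfrac{q.x+p.x}{2}\in[0,\,b.x]$ and this point does lie on $\overline{ab}$. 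Taking $\rp$ to be this midpoint finishes the argument. The hypothesis $q\ne p'$ enters only to ensure that $\overline{qp}$ is a genuine (nondegenerate) segment: $q$ and $p$ can coincide only when $p.y=0$, and then $p=p'$.

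I do not expect any real obstacle here: both parts collapse to the two short chains of linear inequalities above once $a'$ and $b'$ have been written in coordinates. The only spot requiring a little care is part~(ii) --- one should separate the two requirements in ``$\overline{qp}$ is a diameter of $C_p(\rp)$'' (correct center and correct radius) and invoke the definition of $C_p(\rp)$ to see that fixing the center at the midpoint of $\overline{qp}$ forces the radius as well, leaving only the verification that this midpoint lies on the segment $\overline{ab}$ rather than merely on the line supporting it.
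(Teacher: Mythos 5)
Your proof is correct, and both parts check out: the identities $a'=2a-p$, $b'=2b-p$ are exactly what the paper's definition of $a',b'$ gives, the equivalence ``$\overline{qp}$ is a diameter of $C_p(\rp)$ iff $\rp$ is the midpoint of $\overline{qp}$'' is sound for circles of the form $C_p(\rp)$, and your verification that the midpoint lands in $\overline{ab}$ (not just on its supporting line) is the one point that genuinely needs checking. The route, however, is noticeably different from the paper's. The paper argues synthetically, splitting into cases according to whether $p$ lies on the $x$-axis: for (i) it uses the similarity of the triangles $p\,a\,c$ and $p\,a'\,d$ (with $c$ the vertical projection of $p$); for (ii) it takes $z$ to be the intersection of $\overline{pq}$ with the $x$-axis, shows $|\overline{pz}|=|\overline{zq}|$ by similar triangles, and then invokes the perpendicular bisector of $\overline{pq}$ together with the hypothesis $q\ne p'$ to conclude that $z$ is the \emph{unique} admissible center. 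Your coordinate computation collapses all of these cases (and the left/right symmetry) into two chains of linear inequalities, and by constructing $\rp$ directly as the midpoint you avoid the uniqueness-of-center argument altogether --- which is why, in your version, $q\ne p'$ is only needed to rule out the degenerate segment, whereas in the paper it also guarantees that the perpendicular bisector meets the $x$-axis in a single point. What the paper's synthetic proof buys in exchange is that the same similar-triangle picture is reused in the surrounding discussion (Figures~\ref{fig:fig_unique_angle1} and~\ref{fig:fig_unique_angle2}); what yours buys is brevity and a uniform treatment of the case $p.y=0$.
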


\begin{proof}
  ~\paragraph{(i)} First, assume that $p$ lies on the $x$-axis.~Then,
  $p' = p$.~The assumption $a.x \le p.x \le b.x$ implies that
  $p \in \overline{a b}$, which in turn implies that
  $\overline{a b} \subset \overline{a' b'}$; see
  Figure~\ref{fig:fig_unique_angle2}.~Thus, $p \in \overline{a' b'}$,
  i.e., $p' = p \in \overline{a' b'}$.~Now, consider the case that $p$
  does not lie on the $x$-axis.~Let $c$ be the (vertical) projection
  of $p$ onto the $x$-axis.~Since $a.x \le p.x \le b.x$,
  $c \in \overline{a b}$.~The line defined by $p, c$ (note that
  $p \ne c$) is perpendicular to the $x$-axis and let $d$ be its point
  of intersection with the line supporting $\overline{a' b'}$.  Since
  $c \in \overline{a b}$, we conclude that
  $d \in \overline{a' b'}$.~Moreover, by its construction, the line
  segment~$\overline{a' b'}$ is parallel to the $x$-axis, and since
  $|\overline{p a}| = |\overline{a a'}|$, the similarity of the
  triangles with vertices $p, a, c$ and $p, a', d$ implies that
  $|\overline{p c}| = |\overline{c d}|$.~Thus, $p' = d$ and hence
  $p' \in \overline{a' b'}$.

  \paragraph{(ii)} Assume that $p$ lies on the $x$-axis.~Let
  $q \in \overline{a' b'}$ with $q \ne p$, and suppose without loss of
  generality that $q$ is to the left of $p$ (the case where $q$ is to
  the right of $p$ is symmetric).~Then, the midpoint of
  $\overline{q p}$ lies in $\overline{a p}$ and it is the center of
  the unique circle~$C_p(r)$ passing through $q$.~Therefore, $C_p(r)$
  has $\overline{q p}$ as its diameter.

  Now assume that $p$ does not lie on the $x$-axis.~Consider any
  point~$q \in \overline{a' b'}$ with $q \ne p'$.~Let $z$ be the point
  of intersection of the line segment~$\overline{p q}$ with the
  $x$-axis ($z$ exists because $p$ and $\overline{a' b'}$, and hence
  $p$ and $q$, lie on opposite sides of the $x$-axis).~Note that
  $z \in \overline{a b}$ since $q \in \overline{a' b'}$.~Then, by the
  similarity of the triangles $\triangle p a z$ and $\triangle p a' q$
  we have that $|\overline{p z}| = |\overline{z q}|$; i.e., the
  point~$z$ is the midpoint of $\overline{p q}$.~Therefore, $z$
  belongs to the perpendicular bisector of $\overline{p q}$ and in
  fact, it is the only point of intersection of such bisector and the
  $x$-axis.~Note that, since $q \ne p'$, the line passing through $p$
  and $q$ (remember that $p \ne q$) is not perpendicular to the
  $x$-axis.~This implies that the center~$r$ of any circle~$C_p(r)$
  passing through $q$ coincides with $z$, that is, $\overline{q p}$ is
  a diameter of $C_p(r)$.
\end{proof}

\noindent
Lemma~\ref{lemma:phase2}(ii) implies that for any point~$q \ne p'$
belonging to $\overline{a' b'}$, the corresponding
angle~$\omega = \widehat{p \rp q}$ is equal to $\pi$, where
$\rp \in \overline{a b}$ is the center of the circle~$C_p(\rp)$
passing from $q$.

Now we are ready to prove Theorem~\ref{thm:edge_subdiv} which
establishes that the subdivision steps of Phases 1 and 2 achieve the
set goals.

\begin{theorem}\label{thm:edge_subdiv}~
  \begin{itemize}
  \item[(i)] After the completion of Phase~1, no resulting sub-edge
    intersects any circle~$C_p(r)$ for some $r \in \overline{a b}$ in
    more than one point.
  \item[(ii)] After the completion of Phase~2, for any two points
    $q, q'$ (lying on circles $C_p(r)$ and $C_p(r')$, respectively) of
    each resulting sub-edge, the counterclockwise angles
    $\widehat{p r q}$ and $\widehat{p r' q'}$ either both belong to
    $[0, \pi]$ or both belong to $[\pi, 2 \pi]$.
  \end{itemize}
\end{theorem}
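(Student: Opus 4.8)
The plan is to prove the two parts separately, matching the two phases of the subdivision procedure, and in each case tracking how the at-most-two subdivision points introduced on an edge cut it into pieces with the desired property. The main tool for part~(i) is Corollary~\ref{corol:circle_union}, which says that whenever a segment meets a circle $C_p(r)$ in two points $w_1,w_2$ with $\overline{w_1 w_2}$ lying in the closure of $D_p(a)-D_p(b)$ (or of $D_p(b)-D_p(a)$), that segment is tangent to some circle $C_p(r')$ at a point of $\overline{w_1 w_2}$. The main tool for part~(ii) is Lemma~\ref{lemma:phase2}, which identifies $\overline{a'b'}$ as the locus of points subtending angle $\omega=\pi$, so that crossing $\overline{a'b'}$ is precisely what flips $\omega$ between $[0,\pi]$ and $[\pi,2\pi]$.

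For part~(i), I would argue by the case analysis of Phase~1. Recall that the union of all circles $C_p(r)$, $r\in\overline{ab}$, is the closure of $D_p(a)\oplus D_p(b)$, so a sub-edge that avoids $D_p(a)\cup D_p(b)$ or has an endpoint in $D_p(a)\cap D_p(b)$ needs care but is handled as follows. Consider an edge $\overline{uv}$ that does meet $D_p(a)\cup D_p(b)$ with both endpoints outside $D_p(a)\cap D_p(b)$, and suppose after Phase~1 some resulting sub-edge still crosses a circle $C_p(r)$ in two points $w_1,w_2$. The key observation is that $\overline{w_1w_2}$ must then lie entirely in one of the ``lune'' regions: either in the closure of $\bigl((D_p(a)-D_p(b))\cap H_p^L\bigr)$ or of $\bigl((D_p(b)-D_p(a))\cap H_p^R\bigr)$ (it cannot cross $\overline{pp'}$, because Phase~1 subdivides at that crossing, and it cannot exit through $C_p(a)$ or $C_p(b)$ into the symmetric lune, again because Phase~1 subdivided there). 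By Corollary~\ref{corol:circle_union}(ii), the segment is then tangent to some $C_p(r')$ at an interior point of $\overline{w_1w_2}$; but that tangency point is exactly one of the (at most two) points of tangency at which Phase~1 subdivided the original edge, so $w_1$ and $w_2$ lie on opposite sides of it and hence in different sub-edges — contradiction. This reduces to checking carefully that the enumerated Phase~1 cases indeed place subdivision points at every such tangency and at every crossing of $\overline{pp'}$, which is the routine part.

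For part~(ii), I would use the observation following Lemma~\ref{lemma:phase2} that $\overline{a'b'}\setminus\{p'\}$ is exactly the set of points $q$ for which the center $\rp\in\overline{ab}$ of $C_p(\rp)$ through $q$ gives $\widehat{p\rp q}=\pi$. Thus on a single sub-edge (already subdivided by Phase~1, hence meeting each circle at most once), the angle $\omega$ is a continuous function of position that can only pass through $\pi$ where the sub-edge crosses $\overline{a'b'}$ (or passes through $p'$). Phase~2's first rule subdivides at every crossing of $\overline{a'b'}$; the delicate case is a sub-edge tangent to two circles, where the sub-edge can be tangent to $\overline{a'b'}$-adjacent circles on both sides of the line through $p$ perpendicular to the $x$-axis, so $\omega$ can reach $\pi$ without a transversal crossing — this is precisely why Phase~2's second rule adds the subdivision at the foot of that perpendicular. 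I would then verify that after both Phase~2 subdivisions, no resulting sub-edge contains a point subtending angle $\pi$ in its relative interior while also containing points on both sides, so $\omega$ stays in one of $[0,\pi]$ or $[\pi,2\pi]$ throughout; by continuity, the cosine then determines $\omega$ uniquely.

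The main obstacle I expect is the bookkeeping in part~(ii) for the doubly-tangent sub-edges: one must confirm that a sub-edge tangent to two circles $C_p(r_1),C_p(r_2)$ meets $\overline{a'b'}$ and the perpendicular through $p$ in a way that, after the two Phase~2 cuts, genuinely separates any would-be $\omega=\pi$ point from the rest of each piece — this requires understanding the geometry of how the family $\{C_p(r)\}$ sweeps across such a segment (it is tangent ``from outside'' near the extreme circles and the tangency points straddle the foot of the perpendicular), rather than any hard computation. Everything else is a finite case check against the explicitly listed subdivision rules, combined with the continuity/monotonicity of $\omega$ along a sub-edge that Corollary~\ref{corol:circle_union} and Lemma~\ref{lemma:circle_union}(ii) already underwrite.
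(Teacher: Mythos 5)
Your overall strategy for part~(i) --- contradiction, then exhibit a Phase‑1 subdivision point strictly between $w_1$ and $w_2$ --- is the paper's, but your ``key observation'' that $\overline{w_1 w_2}$ must lie entirely in the closure of one lune is false, and this creates a real gap. Both $w_1,w_2$ can lie on the left arc of $C_p(r)$ (so neither crosses $\overline{p p'}$ nor reaches the symmetric lune) while the chord $\overline{w_1 w_2}$ still passes through the interior of the lens $D_p(a)\cap D_p(b)$: take $w_1$ near $p$ and $w_2$ near $p'$ on the same arc, so the chord runs just to the left of $\overline{p p'}$ and dips into $D_p(b)$. In that situation Corollary~\ref{corol:circle_union}(ii) does not apply (its hypothesis that $\overline{w_1 w_2}$ lies in the closure of $D_p(a)-D_p(b)$ fails), and indeed no circle $C_p(r')$ is tangent to such a segment, so there is no tangency point to invoke. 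Note also that Phase~1 does \emph{not} subdivide at crossings of $C_p(a)$ or $C_p(b)$, contrary to your parenthetical. The paper handles this third case separately: when $\overline{w_1 w_2}$ meets the interior of $D_p(a)\cap D_p(b)$, the Phase‑1 subdivision point is the perpendicular projection of $a$ or $b$ onto the edge (the ``closest point'' rule), which lies in the lens and hence strictly between $w_1$ and $w_2$. Your argument needs this case added; it is not subsumed by the tangency argument.

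For part~(ii), your continuity argument tracks only transitions of $\omega$ through $\pi$, i.e., crossings of $\overline{a'b'}$, and this misses the other way a sub-edge can switch between $[0,\pi]$ and $[\pi,2\pi]$: the wrap-around through $0\equiv 2\pi$, which happens as the sub-edge crosses the line through $p$ perpendicular to the $x$-axis (near $p$, the angle jumps from values near $0$ on one side to values near $2\pi$ on the other). That wrap-around, not a tangential contact with $\omega=\pi$, is the actual reason for Phase~2's second rule; your proposed explanation of that rule does not match the geometry, and you explicitly leave the ``doubly-tangent'' bookkeeping unresolved. The paper avoids the continuity/intermediate-value framing altogether: it partitions the closure of $D_p(a)\oplus D_p(b)$ into four regions $R_1,\dots,R_4$ by $\overline{a'b'}$ and the perpendicular through $p$, shows via Lemma~\ref{lemma:phase2}(ii) that the angle range is $[0,\pi]$ on $R_1,R_4$ and $[\pi,2\pi]$ on $R_2,R_3$, and observes that Phase~2 prevents any sub-edge from meeting two regions. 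You would need either to adopt that region classification or to augment your continuity argument with an explicit treatment of the $0/2\pi$ discontinuity; as written, the proof of (ii) is incomplete.
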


\begin{proof}~
  \paragraph{(i)} Suppose for contradiction that there exists a
  sub-edge~$\overline{c d}$ and a circle~$C_p(r)$ with
  $r \in \overline{a b}$ that intersect in two points $w_1$ and $w_2$.
  The point~$p$ and its mirror image~$p'$ subdivide the
  circle~$C_p(r)$ into two arcs, $A_p^L$ and $A_p^R$, the former to
  the left of the line through $p$ perpendicular to the $x$-axis and
  the latter to the right (note that if $p$ lies on the $x$-axis, one
  of these arcs degenerates into a single point).  Then, $w_1, w_2$
  should belong to the same arc; otherwise, $p$ would not lie on the
  $x$-axis and the line segment~$\overline{w_1 w_2}$ would intersect
  the line segment~$\overline{p p'}$, and thus the sub-edge~$c d$
  would have been subdivided in Phase~1 about its point of
  intersection with $\overline{p p'}$.  Suppose without loss of
  generality that $w_1, w_2$ belong to the arc~$A_p^L$.  But then, no
  matter whether the segment~$\overline{w_1 w_2}$ intersects the
  interior of $D_p(a) \cap D_p(b)$ or not, we have a contradiction.
  In the former case, the sub-edge~$c d$ would have been subdivided in
  Phase~1 about the perpendicular projection of $b$ onto $c d$; $b$'s
  projection onto $c d$ belongs to $D_p(a) \cap D_p(b)$ and thus is an
  interior point of $\overline{w_1 w_2}$.  In the latter case, the
  sub-edge~$c d$ would have been subdivided in Phase~1 about its point
  of tangency with a circle~$C_p(t)$ with $t \in \overline{a b}$; this
  point of tangency belongs to $\overline{w_1 w_2}$ as shown in
  Corollary~\ref{corol:circle_union}(ii).~Therefore, after Phase~1, no
  resulting sub-edge intersects any circle~$C_p(r)$ for some
  $r \in \overline{a b}$ in more than one point.

  \paragraph{(ii)} Suppose without loss of generality that the
  point~$p$ lies above or on the $x$-axis and it holds that
  $p.x \ge a.x$; the case where it holds that $p.x < a.x$ is
  left-to-right symmetric (the corresponding angles are equal to
  $2 \pi$ minus the corresponding angles when $p.x > b.x$), whereas
  the case where $p$ lies below the $x$-axis is top-down symmetric (in
  this case too, the corresponding angles are equal to $2 \pi$ minus
  the corresponding angles when $p$ lies above the $x$-axis).

  Let $R_1$ ($R_3$, respectively) be the subsets of points in the
  closure of the symmetric difference~$D_P(a) \oplus D_p(b)$ that are
  on or to the left of the line through $p$ that is perpendicular to
  the $x$-axis and are on or above (on or below, respectively)
  $\overline{a' b'}$; symmetrically, let $R_2$ ($R_4$, respectively)
  be the subsets of points in the closure of the symmetric
  difference~$D_P(a) \oplus D_p(b)$ that are on or to the right of the
  line through $p$ that is perpendicular to the $x$-axis and are on or
  above (on or below, respectively) $\overline{a' b'}$; see
  Figure~\ref{fig:fig_unique_angle1} and
  Figure~\ref{fig:fig_unique_angle2}.  Consider a point~$w$ lying on a
  circle~$C_p(t)$ with $t \in \overline{a b}$.  Since according to
  Lemma~\ref{lemma:phase2}(ii), for any
  point~$q \in \overline{a' b'}$, the segment~$\overline{q p}$ is a
  diameter of the circle centered on the $x$-axis and passing from
  $p, q$, if $w \in R_1$, the counterclockwise angle~$\widehat{p t w}$
  belongs to $[0, \pi]$.  Similarly, if $w \in R_2$ then
  $\widehat{p t w} \in [\pi, 2 \pi]$, if $w \in R_3$ then
  $\widehat{p t w} \in [\pi, 2 \pi]$, and if $w \in R_4$ then
  $\widehat{p t w} \in [0, \pi]$.~Since no sub-edge resulting after
  Phase~2 contains points in more than one of the regions
  $R_1, R_2, R_3, R_4$, the statement of the theorem follows.
\end{proof}

\begin{figure}[t]
  \begin{center}
    \includegraphics[height=7.0cm]{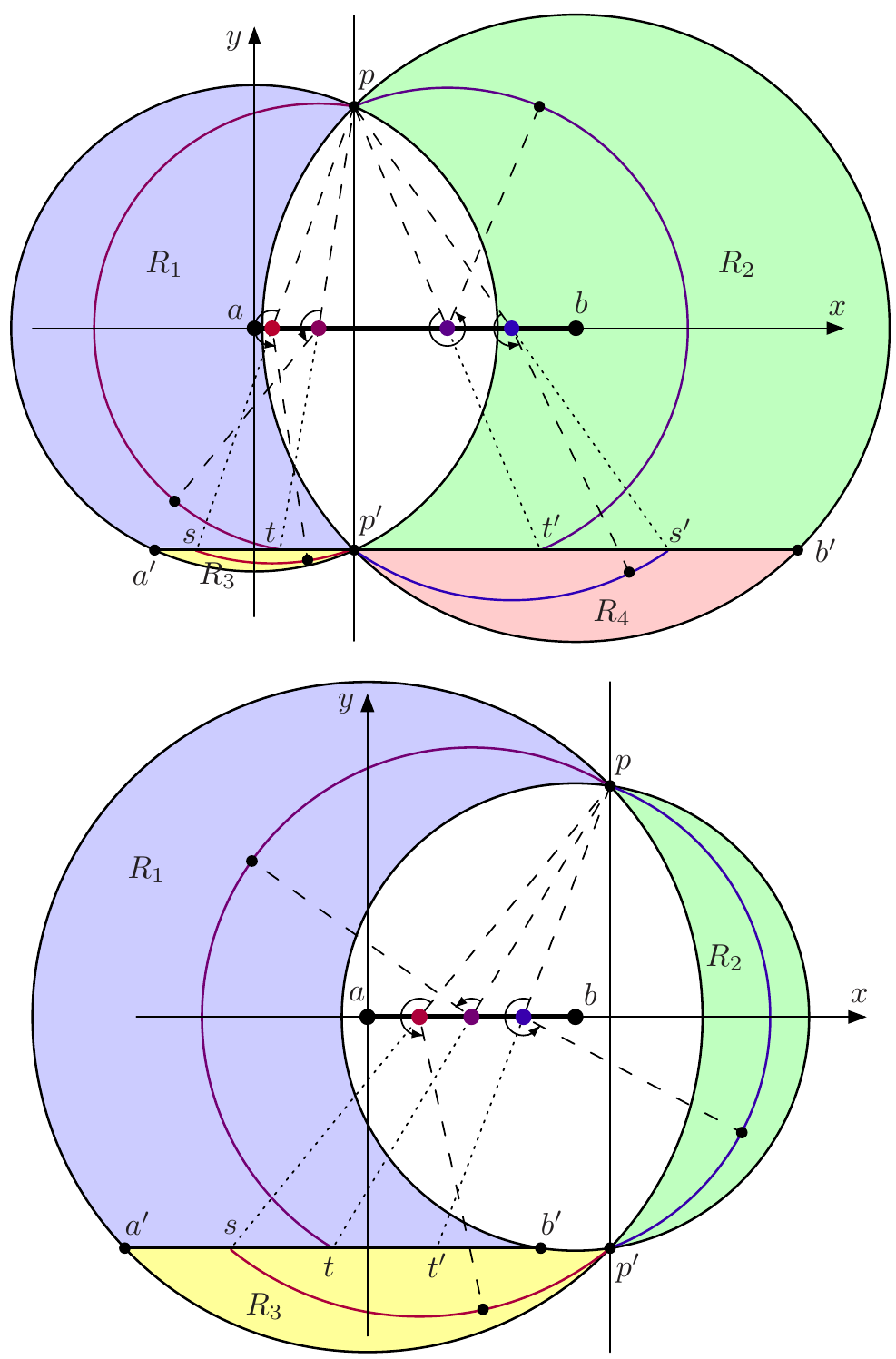}
    \caption{The partition of the closure of the symmetric difference
      $D_p(a) \oplus D_p(b)$ about the line segment~$\overline{a' b'}$
      and the line defined by $p, p'$ into regions
      $R_1, R_2, R_3, R_4$ when the point~$p$ does not lie on the
      $x$-axis.  Note that the line segments $p s, p s', p t, p t'$
      are diameters.}
    \label{fig:fig_unique_angle1}
  \end{center}
\end{figure}

\begin{figure}[t]
  \begin{center}
    \includegraphics[height=5.5cm]{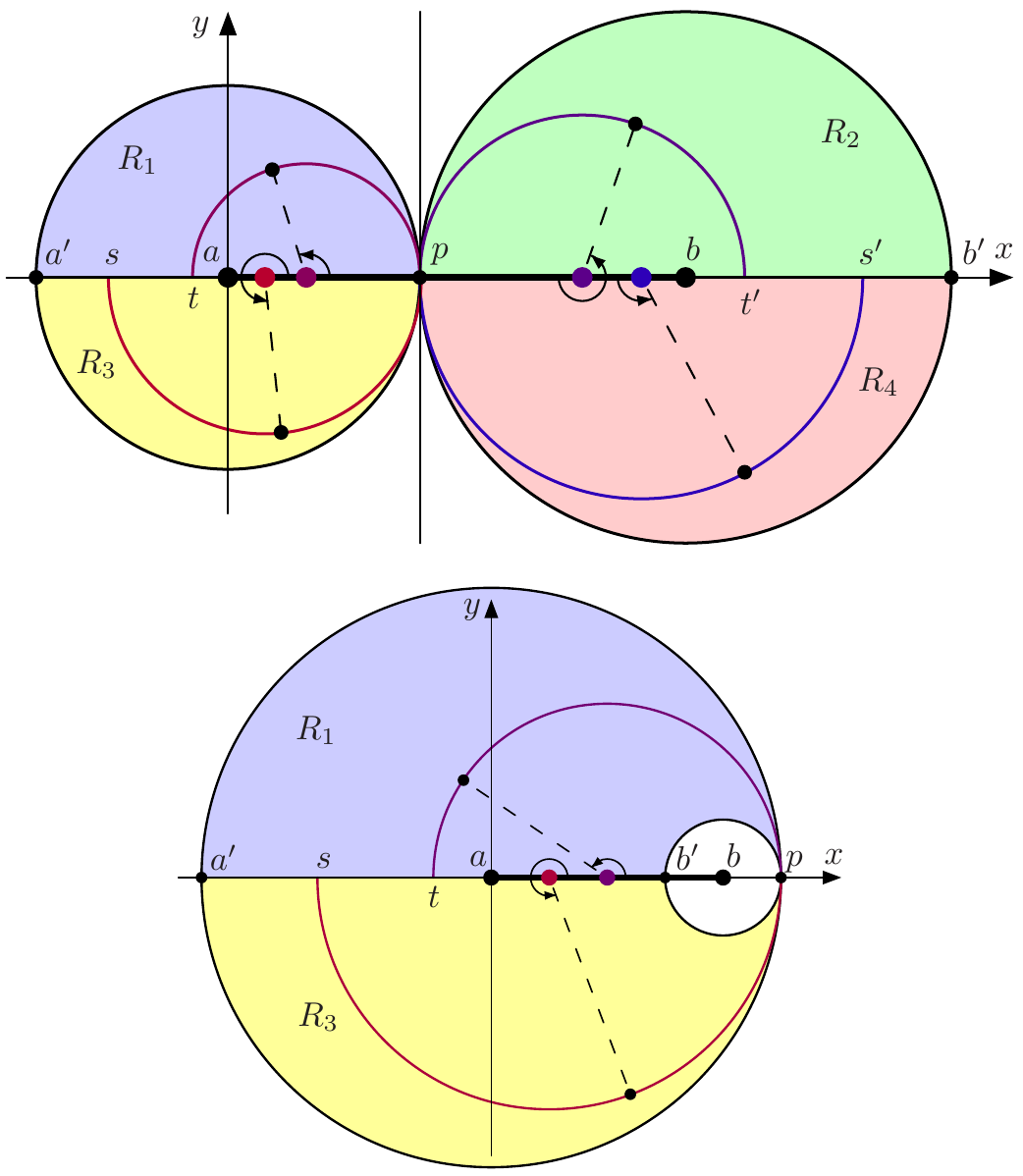}
    \caption{The partition of the closure of the symmetric difference
      $D_p(a) \oplus D_p(b)$ about the line segment~$\overline{a' b'}$
      and the line that is perpendicular to the $x$-axis at $p$ into
      regions $R_1, R_2, R_3, R_4$ when the point~$p$ lies on the
      $x$-axis.  Note that the line segments $p s, p s', p t, p t'$
      are diameters.}
    \label{fig:fig_unique_angle2}
  \end{center}
\end{figure}

\subsection{The Algorithm}

We are now ready to outline our algorithm for
Problem~\ref{pro:intro:segment_mcr}:

\begin{enumerate}

\item \label{enum:srMCR:nm:0} \textbf{Subdivide the edges of
    polygon~$P$ about the $x$-axis.}

\item \label{enum:srMCR:nm:1} \textbf{Process each point~$p \in S$.}
  For each point~$p$, we subdivide each edge of polygon~$P$ (resulting
  from the previous step) into sub-edges (see the edge subdivision
  process described earlier). Next, for each sub-edge, we compute the
  curve of the angle~$\omega$ with respect to the $x$-coordinate~$x$
  of the rotation center as it moves along $\overline{a b}$ (see
  Equation~\ref{eq:final_eq_omega}), and finally we form the regions
  bounded by these curves.

\item \label{enum:srMCR:nm:2} \textbf{Construct and traverse the
    arrangement of all the regions.}  Using standard techniques, we
  construct the arrangement of all the regions of all the elements of
  $S$.~Next, we traverse the dual graph of the resulting arrangement
  looking for a sub-region of maximum depth; any point in this
  sub-region determines a position~$(x,0)$ of $\rp$ and a rotation
  angle~$\omega$ that constitute a solution to the problem.
\end{enumerate}

\subsection{Time and Space Complexity}

Step~\ref{enum:srMCR:nm:0} clearly takes $O(m)$ time and space,
resulting into at most $2 m$ sub-edges.~The edge subdivision while
processing a point~$p \in S$ in Step~\ref{enum:srMCR:nm:1} takes
$O(m)$ time and space, producing $O(m)$ sub-edges: For each
sub-edge~$\overline{u v}$, $O(1)$ time suffices to determine whether
its endpoints belong to the disks $D_p(a)$ and $D_p(b)$, and whether
$\overline{u v}$ intersects the circles $C_p(a), C_p(b)$, the
segment~$\overline{p p'}$, or the line supporting $\overline{p p'}$,
as well as to compute any points of intersection. Moreover, the
centers of the circles $C_p(r)$, for $r \in \overline{a b}$, to which
$\overline{u v}$ is tangent are precisely the points of intersection
of the segment~$\overline{a b}$ with the parabola that is equidistant
from point~$p$ and the line supporting $\overline{u v}$.~Then,
processing $p$ yields $O(m)$ curves bounding $O(m)$ regions.~Thus,
processing all the points in $S$ in Step~\ref{enum:srMCR:nm:1} takes a
total of $O(n m)$ time and produces a set of $O(n m)$ regions bounded
by $O(n m)$ curves in the $x$-$\omega$ plane.~From
Step~\ref{eq:final_eq_omega}, we can show the following lemma:

\begin{lemma}
  Any two ($\omega$-$x$)-curves as in Equation~\ref{eq:final_eq_omega}
  have at most $32$ points of intersection.
\end{lemma}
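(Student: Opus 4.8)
The claim is that two curves of the form $\omega = \arccos\!\big(\frac{\gamma(x)\pm\sqrt{\delta(x)}}{\epsilon(x)}\big)$, with $\deg\gamma = \deg\epsilon = 2$ and $\deg\delta = 4$, meet in at most $32$ points. The plan is to reduce an intersection of the two transcendental curves to a polynomial equation in $x$ and bound its degree, then multiply that bound by the number of sign choices of the two square roots. A point $(x,\omega)$ on both curves satisfies $\cos\omega = \frac{\gamma_1(x)\pm\sqrt{\delta_1(x)}}{\epsilon_1(x)} = \frac{\gamma_2(x)\pm\sqrt{\delta_2(x)}}{\epsilon_2(x)}$, so it suffices to count, for each fixed choice of the two signs, the solutions $x$ of
\[
  \frac{\gamma_1(x)\pm\sqrt{\delta_1(x)}}{\epsilon_1(x)}
  \ =\ \frac{\gamma_2(x)\pm\sqrt{\delta_2(x)}}{\epsilon_2(x)},
\]
and then sum over the four sign patterns. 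Since $\arccos$ is injective on $[-1,1]$, equal cosines (with $\omega$ ranging in a half-period, as guaranteed by the Phase-2 subdivision — though we need not even invoke that here) give equal $\omega$, so counting $x$-solutions is enough.

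\textbf{Key steps.} First I would clear denominators: the displayed equation is equivalent to $\epsilon_2(\gamma_1 \pm \sqrt{\delta_1}) = \epsilon_1(\gamma_2 \pm \sqrt{\delta_2})$, i.e.
\[
  \big(\epsilon_2\gamma_1 - \epsilon_1\gamma_2\big) \ =\ \mp\,\epsilon_2\sqrt{\delta_1}\ \pm\ \epsilon_1\sqrt{\delta_2}.
\]
Call the left side $A(x)$, a polynomial of degree $\le 4$. To eliminate the radicals I would isolate and square twice. Writing the right side as $u + v$ with $u = \mp\epsilon_2\sqrt{\delta_1}$, $v = \pm\epsilon_1\sqrt{\delta_2}$, note $u^2 = \epsilon_2^2\delta_1$ has degree $\le 8$ and $v^2 = \epsilon_1^2\delta_2$ has degree $\le 8$. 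From $A = u+v$ we get $A^2 - u^2 - v^2 = 2uv$, where the left side is a polynomial of degree $\le 8$; squaring again yields $(A^2 - u^2 - v^2)^2 = 4u^2v^2 = 4\epsilon_1^2\epsilon_2^2\delta_1\delta_2$. The left side has degree $\le 16$ and the right side has degree $\le 2+2+4+4 = 12$, so the resulting polynomial identity $(A^2 - u^2 - v^2)^2 - 4\epsilon_1^2\epsilon_2^2\delta_1\delta_2 = 0$ has degree $\le 16$, hence at most $16$ real roots — unless it is identically zero, a degenerate case corresponding to the two curves coinciding, which we may exclude (or handle separately as giving a one-parameter family rather than isolated points). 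That gives at most $16$ values of $x$ per sign pattern.

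\textbf{Combining and the main obstacle.} Summing over the sign choices naively gives $4 \times 16 = 64$, twice the claimed bound, so the crux is to observe that the two sign-flips are not independent: replacing $(\,\text{sign}_1,\text{sign}_2) \to (-\text{sign}_1,-\text{sign}_2)$ replaces $A$ by $-A$ and $(u,v)$ by $(-u,-v)$, leaving the final degree-$16$ polynomial $(A^2-u^2-v^2)^2 - 4u^2v^2$ unchanged. Hence there are only two genuinely distinct polynomials, one for $\text{sign}_1 = \text{sign}_2$ and one for $\text{sign}_1 \ne \text{sign}_2$, giving $2 \times 16 = 32$ — the claimed bound. The main technical obstacle is bookkeeping the degrees carefully through the two squarings to confirm the final degree is exactly $\le 16$ and not larger (in particular checking that the top-degree terms of $A^2 - u^2 - v^2$ do not push past degree $8$ — they cannot, since $\deg A^2 \le 8$, $\deg u^2 \le 8$, $\deg v^2 \le 8$), together with the routine but necessary argument that the degenerate identically-zero case is exactly the case of coincident curves, which by the disjoint-interiors remark preceding the lemma does not arise for distinct sub-edges. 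A minor point to note explicitly is that extraneous roots introduced by squaring only make the true count smaller, so the degree bound is a valid upper bound on genuine intersections.
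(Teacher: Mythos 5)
Your reduction---cross-multiply, write the radical part as $u+v$, and square twice to reach a polynomial of degree at most $16$---is exactly the paper's derivation (its equations~\ref{eq:eq1} and~\ref{eq:poly}). The flaw is in your final accounting. The claim that the four sign patterns collapse to ``two genuinely distinct polynomials'' is false: the twice-squared equation $(A^2-u^2-v^2)^2 = 4u^2v^2$ depends only on $u^2 = \epsilon_2^2\delta_1$, $v^2 = \epsilon_1^2\delta_2$ and their product, all of which are independent of the sign choices, so \emph{all four} sign patterns produce one and the same degree-$16$ polynomial, not two. (A minor side slip: $\deg\bigl(4\epsilon_1^2\epsilon_2^2\delta_1\delta_2\bigr) = 4+4+4+4 = 16$, not $12$; you forgot to double the degrees of $\epsilon_1,\epsilon_2$ upon squaring, though this does not affect the bound.) More importantly, even granting two polynomials, the step ``$2\times 16 = 32$'' counts roots, i.e.\ $x$-values, not intersection points: a single root $x_0$ that happens to satisfy both the $(+,+)$ and the $(-,-)$ pre-squared equations yields two distinct intersection points, one on each branch, so your tally as written would only justify $2\times 16\times 2 = 64$.

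The correct accounting---and the one the paper uses---is this: there are at most $16$ candidate abscissas (the real roots of the single polynomial in Equation~\ref{eq:poly}), and above each such abscissa the first curve passes through at most two points, namely the two sign choices in Equation~\ref{eq:final_eq_omega}; hence at most two intersection points per abscissa, for a total of $2\times 16 = 32$. With that replacement your argument is complete: everything up to and including the degree-$16$ bound is sound, and your closing remark that squaring only introduces extraneous roots (so the bound is one-sided) correctly matches the paper's parenthetical caveat.
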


\begin{proof}
  The idea is based on the fact that a polynomial of constant degree
  has a constant number of roots. In our case, we have a square root
  which needs to be squared in order to be removed. Let us consider
  the two ($\omega$-$x$)-curves
  \[
  \omega = \arccos \left( \frac{\gamma_1(x) \pm
    \sqrt{\delta_1(x)}}{\epsilon_1(x)} \right)
  \qquad \hbox{and} \qquad
  \omega = \arccos \left( \frac{\gamma_2(x) \pm
    \sqrt{\delta_2(x)}}{\epsilon_2(x)} \right).
  \]
  Since a point of intersection of these curves belongs to both of
  them, we have:
  \[
  \omega \ = \ \arccos \left( \frac{\gamma_1(x) \pm
    \sqrt{\delta_1(x)}}{\epsilon_1(x)} \right)
  \ = \ \arccos \left( \frac{\gamma_2(x) \pm
    \sqrt{\delta_2(x)}}{\epsilon_2(x)} \right)
  \]
  \begin{equation}
    \Longrightarrow
    \   \gamma_1(x) \, \epsilon_2(x) - \gamma_2(x) \, \epsilon_1(x)
    \  = \  \pm \Bigl( \epsilon_2(x) \, \sqrt{\delta_1(x)} -
    \epsilon_1(x) \, \sqrt{\delta_2(x)} \Bigr)
    \label{eq:eq1}
  \end{equation}
  from which, by squaring twice to get rid of the square
  roots, we get

  \begin{align}\label{eq:poly}
    & \Bigl( \gamma_1(x) \, \epsilon_2(x) - \gamma_2(x) \, \epsilon_1(x)
      \Bigr)^2
      \  = \  \Bigl( \epsilon_2(x) \, \sqrt{\delta_1(x)} -
      \epsilon_1(x) \, \sqrt{\delta_2(x)} \Bigr)^2 \nonumber \\
    & \Longrightarrow
      \   \Bigl(
      \gamma_1(x) \, \epsilon_2(x) - \gamma_2(x) \, \epsilon_1(x)
      \Bigr)^2
      - {\epsilon^2_2(x)} \, {\delta_1(x)}
      - {\epsilon^2_1(x)} \, {\delta_2(x)}
      \nonumber \\   
    & \phantom{xxxxxxxxxxxxxxxxxxxxxxxxxxxxx}
      \  =
      \  -2 \, {\epsilon_1(x) \, \epsilon_2(x)}
      \, {\sqrt{\delta_1(x) \, \delta_2(x)}}
      \nonumber \\
    & \Longrightarrow
      \     \left( \Bigl( \gamma_1(x) \, \epsilon_2(x)
      - \gamma_2(x) \, \epsilon_1(x) \Bigr)^2
      - {\epsilon^2_2(x)} \, {\delta_1(x)}
      - {\epsilon^2_1(x)} \, {\delta_2(x)} \right)^2
      \nonumber \\   
    & \phantom{xxxxxxxxxxxxxxxxxxxxxxxxxxxxx}
      \  =
      \  4 \, {\epsilon^2_1(x) \, \epsilon^2_2(x)}
      \, \delta_1(x) \, \delta_2(x).
  \end{align}
  The last equality is a polynomial of degree at most $16$ and, thus,
  it has at most~$16$ real roots for $x$ (it is important to note that
  the value of $x$ in any pair $(\omega,x)$ satisfying
  Equation~\ref{eq:eq1} satisfies the polynomial in
  Equation~\ref{eq:poly}, although the reverse does not necessarily
  hold, i.e., not every root of the polynomial satisfies
  Equation~\ref{eq:eq1}).~Thus, if we substitute the real roots of the
  polynomial in Equation~\ref{eq:poly} into
  Equation~\ref{eq:final_eq_omega}, we get at most $32$ possible
  points of intersection, due to the $\pm$ operand.
\end{proof}

Hence, the total number of intersection points of all the curves is
$O(n^2 m^2)$.~Using standard techniques, in $O(n^2 m^2 \log (nm))$
time the arrangement of all these regions can be computed, and the
dual graph of the resulting arrangement can be traversed looking for a
sub-region of maximum depth.~Any point in this sub-region determines a
position of the rotation center~$\rp$ and a rotation angle~$\omega$
that constitute a solution to the problem.~The space complexity is
$O(n^2 m^2)$. Then:

\begin{theorem}\label{lem:restr:nm}
  The Segment-restricted MCR problem 
  can be solved in\\
  $O(n^2 m^2\log (n m))$ time and $O(n^2 m^2)$ space.
\end{theorem}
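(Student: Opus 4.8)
The plan is to reduce the Segment-restricted MCR problem to locating a cell of maximum depth in an arrangement of $x$-monotone curves in the $x$-$\omega$ plane, and then to bound the size of that arrangement together with the cost of constructing and searching it. Recall from the discussion around Equation~\ref{eq:final_eq_omega} that, after subdividing the edges of $P$ about the $x$-axis, we process each point $p \in S$ independently: we further subdivide every sub-edge of $P$ by Phases~1 and~2, which by Theorem~\ref{thm:edge_subdiv} guarantees that each resulting sub-edge meets every circle $C_p(\rp)$ with $\rp \in \overline{a b}$ in at most one point and that, on each sub-edge, the counterclockwise angle $\omega$ stays in $[0,\pi]$ or in $[\pi, 2\pi]$, so it is recovered unambiguously from $\cos \omega$ through Equation~\ref{eq:final_eq_omega}. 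Hence each sub-edge contributes one curve $\omega = f(x)$, the graph of a function of constant algebraic description complexity over a subinterval of $[0, b.x]$, and the $O(m)$ curves produced for $p$ bound $O(m)$ planar regions whose points $(x,\omega)$ are exactly the rotations (center $(x,0)$, angle $\omega$) for which $p \in P$; since $P$ is simple these regions have pairwise disjoint interiors, so overlaying, for all $p \in S$, all of these curves yields an arrangement in which the depth of a cell --- the number of points $p \in S$ whose region collection covers it --- equals the number of points of $S$ contained in $P$ for the corresponding rotation.

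I would first account for the construction cost. Subdividing the edges of $P$ about the $x$-axis takes $O(m)$ time and space and produces at most $2m$ sub-edges (Step~\ref{enum:srMCR:nm:0}). For each $p \in S$, the two-phase subdivision adds only a constant number of cut points per edge, hence $O(m)$ sub-edges, and every cut point --- an intersection of a sub-edge with $C_p(a)$, $C_p(b)$, the segment $\overline{p p'}$ or its supporting line, the segment $\overline{a' b'}$, or a tangency with some $C_p(\rp)$ (these tangency centers being the points where $\overline{a b}$ meets the parabola equidistant from $p$ and the line supporting the sub-edge) --- is computed in $O(1)$ time. Thus $p$ is processed in $O(m)$ time, and over all of $S$ this costs $O(nm)$ time and yields $O(nm)$ curves bounding $O(nm)$ regions.

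Next I would bound and build the arrangement. By the lemma established just above, any two of these $\omega$-$x$ curves meet in at most $32$ points, so the $O(nm)$ curves have $O(n^2 m^2)$ intersection points between pairs of curves in total; since each curve is the graph of a function on an $x$-interval (an $x$-monotone arc of bounded complexity), a standard Bentley--Ottmann-type sweep or a randomized incremental construction builds the full arrangement in $O(n^2 m^2 \log(nm))$ time and $O(n^2 m^2)$ space. Finally, traversing the dual graph of the arrangement --- stepping between adjacent faces and updating a depth counter by $\pm 1$ each time a curve is crossed --- identifies a face of maximum depth in time linear in the arrangement size; any point $(x,\omega)$ in that face gives a rotation center $\rp = (x,0)$ on $\ell$ and a counterclockwise angle $\omega$ realizing the maximum coverage. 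The variants in which $\rp$ is restricted to a line, or to a polygonal chain, are handled by applying the same argument to the line, or to each segment of the chain in turn.

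The main obstacle is concentrated in the two places the earlier development has already disposed of: ensuring that the curves are well behaved --- each sub-edge met by each circle at most once and $\omega$ a single-valued, continuous function of $x$ along each sub-edge --- which is exactly what Phases~1--2 and Theorem~\ref{thm:edge_subdiv} secure, and bounding the number of crossings between pairs of curves, which the preceding lemma settles by squaring twice to obtain a polynomial in $x$ of degree at most $16$. Granting these, the stated time and space bounds follow from the standard machinery for arrangements of $x$-monotone arcs.
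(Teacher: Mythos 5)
Your proposal is correct and follows essentially the same route as the paper: subdivide the polygon edges about the $x$-axis and then per point via Phases~1--2, generate the $O(nm)$ curves of Equation~\ref{eq:final_eq_omega}, invoke the $32$-intersection lemma to bound the arrangement size by $O(n^2m^2)$, and build and traverse the arrangement's dual graph to find a maximum-depth cell. The only addition is your explicit mention of a Bentley--Ottmann sweep or randomized incremental construction, which just makes concrete the paper's appeal to ``standard techniques.''
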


Note that Problem~\ref{pro:intro:segment_mcr} can also be solved in
$O(n^2 m^2\log (n m))$ time even when the rotation center is
restricted to lie on a line~$L$: Compute the Voronoi diagram
of~$P \cup S$, and apply the algorithm we just described to a segment
of~$L$ containing all the intersection points of~$L$ and the Voronoi
edges.~Moreover, if we restrict the rotation center to lie on a
polygonal chain with $s$ line segments, we can trivially obtain the
optimal placement of~$P$ using $O(sn^2 m^2\log (n m))$ time.~In both
cases, the space complexity is $O(n^2 m^2)$.

\subsection{Equation~\ref{eq:final_eq_omega}: expressing~$w$ as a function of~$x$}
\label{appendix}

In order to simplify the exposition leading to
Equation~\ref{eq:final_eq_omega}, for each point~$s$ in the plane
other than the current rotation center~$\rp$, we define a
corresponding angle~$\vartheta_s$ with respect to $\rp$.~In
particular, let $\Hur$ be the set of points above the $x$-axis or on
the $x$-axis and to the right of $\rp$ and let $\Hdl$ be the set of
points below the $x$-axis or on the $x$-axis and to the left of $\rp$
(clearly, the sets $\Hur$ and $\Hdl$ partition
$\mathbb{R}^2 - \{\rp\}$).  Then,
\begin{itemize}
\item if $s \in \Hur$, $\vartheta_s$ is the angle swept by the
  rightward horizontal ray emanating from $\rp$ as it moves in
  counterclockwise direction around $\rp$ until it coincides with the
  ray~$\overrightarrow{\rp s}$ (see Figure~\ref{fig:fig3.1}, left);
\item if $s \in \Hdl$, $\vartheta_s$ is the angle swept by the
  leftward horizontal ray emanating from $\rp$ as it moves in
  counterclockwise direction around $\rp$ until it coincides with the
  ray~$\overrightarrow{\rp s}$ (see Figure~\ref{fig:fig3.1}, right).
\end{itemize}
(Note that for all points~$s$ on the $x$-axis, $\vartheta_s = 0$.)
From the definition of $\vartheta_s$, it follows that in all cases
\begin{equation}
  \label{eq:theta}
  0 \le \vartheta_s < \pi
\end{equation}
(we consider counterclockwise and clockwise angles being positive and
negative, respectively) and

\begin{equation}
  \label{eq:cos_sin}
  \cos \vartheta_s = \frac{s.x - \rp.x}{d(s,\rp)} \, sgn(s.y)
  \qquad
  \sin \vartheta_s = \frac{|s.y|}{d(s,\rp)} =
  \frac{s.y}{d(s,\rp)} \, sgn(s.y)
\end{equation}
where $d(s,\rp)$ denotes the distance of point~$s$ from the rotation
center~$\rp$, $p.x$ and $p.y$ are respectively the \noindent
$x-$ and $y-$coordinates of a point~$p$, and $sgn(s.y)$ is the sign of $s.y$.

\begin{figure}[t]
  \centering

  \psfrag{s}[][]{$s$}
  \psfrag{r}[][]{$\rp$}
  \psfrag{x}[][]{$x$}
  \psfrag{d.}[][]{$\vartheta_s$}

  \includegraphics[height=2.5cm]{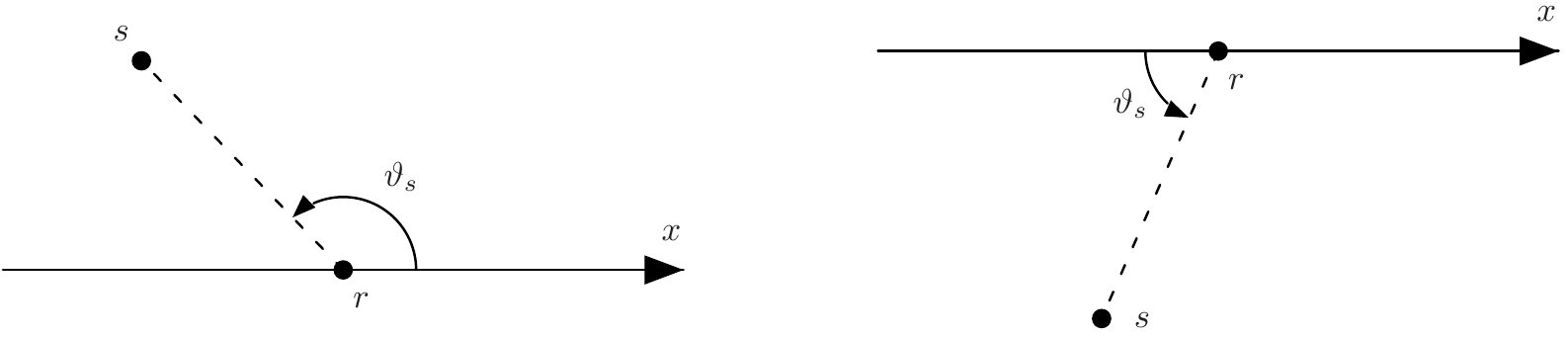}
  \caption{The definition of the angle~$\vartheta_s$ for any
    point~$s \neq \rp$.}
  \label{fig:fig3.1}
\end{figure}

Now, we distinguish two main cases:

\begin{itemize}
\item \emph{Point~$p$ and the intersection point~$q$ of the
    circle~$C_p(\rp)$ and the edge~$e = \overline{uv}$ of $P$ both
    belong to either $\Hur$ or $\Hdl$ (see
    Figure~\ref{fig:restr_mcr:fig9}(a))}: if
  $\vartheta_p \le \vartheta_q$ then
  \begin{equation}
    \label{eq:same2}
    \omega \  = \  \vartheta_q - \vartheta_p
  \end{equation}
  otherwise
  \begin{equation}
    \label{eq:same1}
    \omega \  = \  (\pi - \vartheta_p) + \pi + \vartheta_q
    \  = \  2 \pi + \vartheta_q - \vartheta_p.
  \end{equation}

\item \emph{Point~$p$ and the intersection point~$q$ of the
    circle~$C_p(\rp)$ and the edge~$e = \overline{uv}$ of $P$ do not
    both belong to either $\Hur$ or $\Hdl$ (see
    Figure~\ref{fig:restr_mcr:fig9}(b))}: in this case,
  \begin{equation}
    \label{eq:opposite}
    \omega \  = \  (\pi - \vartheta_p) + \vartheta_q
    \  = \  \pi + \vartheta_q - \vartheta_p.
  \end{equation}
\end{itemize}
It is important to observe that the definition of $\Hur$ and $\Hdl$
ensures that the above expressions for $\omega$ hold for all special
cases in which at least one of $p, q$ lies on the $x$-axis, as
summarized in the following table.
\smallskip
{
\footnotesize
\begin{center}
  \begin{tabular}{cc|c|c|c|c|}
    \cline{3-6}
    & & \multicolumn{2}{c|}{$p \in \Hur$} & \multicolumn{2}{c|}{$p \in \Hdl$}
    \\
    & & {\scriptsize $p$ on $x$-axis} & {\scriptsize $p$ above $x$-axis}
    & {\scriptsize $p$ on $x$-axis} & {\scriptsize $p$ below $x$-axis}
    \\
    & & {\small $\vartheta_p = 0$} & {\small $0 < \vartheta_p < \pi$}
    & {\small $\vartheta_p = 0$} & {\small $0 < \vartheta_p < \pi$}
    \\
    \cline{1-6}
    \multicolumn{1}{ |c }{\multirow{4}{*}{$q \in \Hur$}}
    & \multicolumn{1}{ c| }{{\scriptsize $q$ on $x$-axis}}
      & {\multirow{2}{*}{ $\omega = 0$ }} & {\multirow{2}{*}{ $\omega = 2 \pi - \vartheta_p$ }}
    & {\multirow{2}{*}{ $\omega = \pi$ }} & {\multirow{2}{*}{ $\omega = \pi - \vartheta_p$ }}
    \\
    \multicolumn{1}{ |c  }{} & \multicolumn{1}{ c| }{{\small $\vartheta_q = 0$}} & {} & {} & {} & {}
    \\ \cline{2-6}
    \multicolumn{1}{ |c  }{} & \multicolumn{1}{ c| }{{\scriptsize $q$ above $x$-axis}}
      & {\multirow{2}{*}{ $\omega = \vartheta_q$ }} & {\multirow{2}{*}{ {\small Eq.~(\ref{eq:same2}), (\ref{eq:same1})} }}
    & {\multirow{2}{*}{ $\omega = \pi + \vartheta_q$ }}
      & {\multirow{2}{*}{ {\small Eq.~(\ref{eq:opposite})} }}
    \\
    \multicolumn{1}{ |c  }{} & \multicolumn{1}{ c| }{{\small $0 < \vartheta_q < \pi$}} & {} & {} & {} & {}
    \\ \cline{1-6}
    \multicolumn{1}{ |c  }{\multirow{4}{*}{$q \in \Hdl$} } & \multicolumn{1}{ c| }{{\scriptsize $q$ on $x$-axis}}
      & {\multirow{2}{*}{ $\omega = \pi$ }} & {\multirow{2}{*}{ $\omega = \pi - \vartheta_p$ }}
    & {\multirow{2}{*}{ $\omega = 0$ }} & {\multirow{2}{*}{ $\omega = 2 \pi - \vartheta_p$ }}
    \\
    \multicolumn{1}{ |c  }{} & \multicolumn{1}{ c| }{{\small $\vartheta_q = 0$}} & {} & {} & {} & {}
    \\ \cline{2-6}
    \multicolumn{1}{ |c  }{} & \multicolumn{1}{ c| }{{\scriptsize $q$ below $x$-axis}}
      & {\multirow{2}{*}{ $\omega = \pi + \vartheta_q$ }} & {\multirow{2}{*}{ {\small Eq.~(\ref{eq:opposite})} }}
    & {\multirow{2}{*}{ $\omega = \vartheta_q$ }}
      & {\multirow{2}{*}{ {\small Eq.~(\ref{eq:same2}), (\ref{eq:same1})} }}
    \\
    \multicolumn{1}{ |c  }{} & \multicolumn{1}{ c| }{{\small $0 < \vartheta_q < \pi$}} & {} & {} & {} & {}
    \\ \cline{1-6}
  \end{tabular}
\end{center}
}
\smallskip

\begin{figure}[ht]
  \centering

  \psfrag{a}[][]{$a$}
  \psfrag{b}[][]{$b$}
  \psfrag{p}[][]{$p$}
  \psfrag{q}[][]{$q$}
  \psfrag{r}[][]{$\rp$}
  \psfrag{u}[][]{$u$}
  \psfrag{v}[][]{$v$}
  \psfrag{x}[][]{$x$}
  \psfrag{y}[][]{$y$}
  \psfrag{w}[][]{$\omega$}
  \psfrag{d.}[][]{$\vartheta_p$}
  \psfrag{d,}[][]{$\vartheta_q$}

  \subcaptionbox{\label{fig:restr_mcr:fig9:1}}
  {\includegraphics[height=4.6cm]{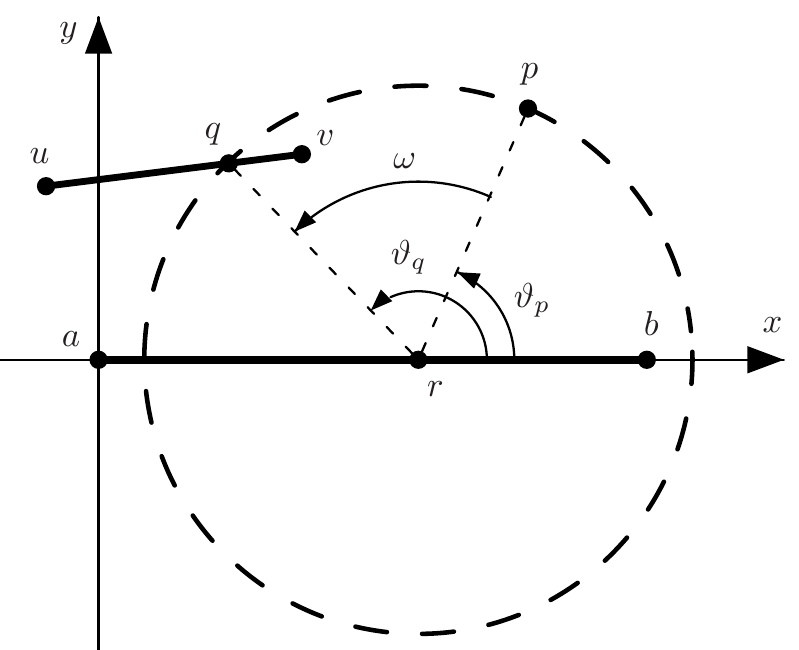} \hspace{3em}
    \includegraphics[height=4.6cm]{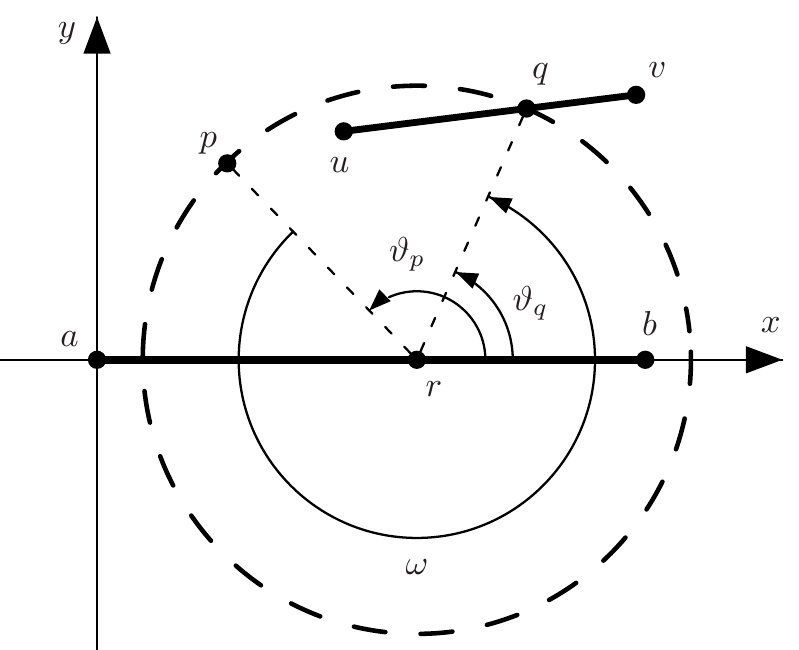}}
  \\[2em]
  \subcaptionbox{\label{fig:restr_mcr:fig9:2}}
  {\includegraphics[height=4.6cm]{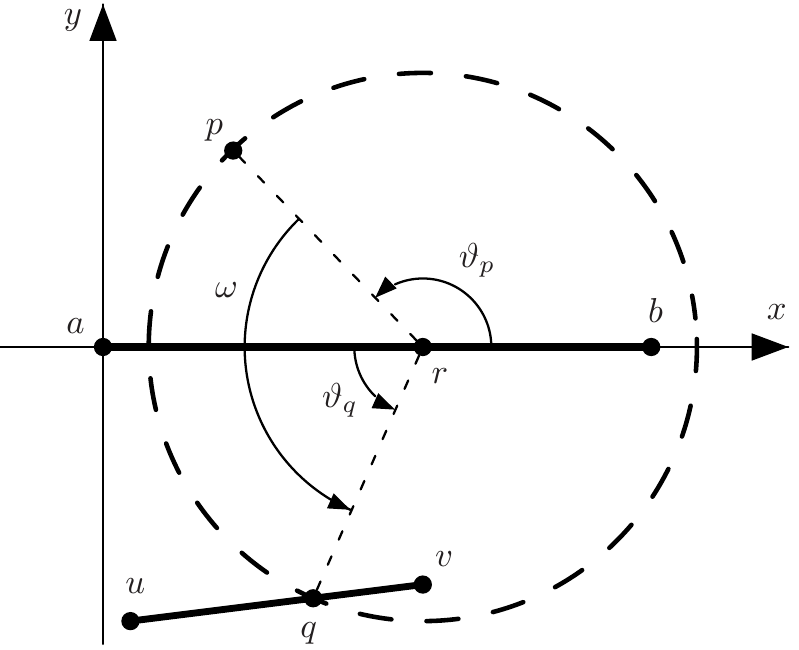}}

  \caption{
    Parameterizing the intersection between the circle~$C_p(\rp)$ and
    the edge~$\overline{uv}$ while $\rp$ moves along
    segment~$\overline{ab}$ when point~$p$ and the intersection~$q$ of
    $C_p(\rp)$ and $\overline{uv}$ are
    \subref{fig:restr_mcr:fig9:1}~in the same halfplane and
    \subref{fig:restr_mcr:fig9:2}~in opposite halfplanes (with respect
    to the $x$-axis).}
  \label{fig:restr_mcr:fig9}
\end{figure}

In all cases,
$\cos(\omega) \ = \ \cos(\vartheta_q - \vartheta_p) \ = \
\cos(\vartheta_q) \, \cos(\vartheta_p) + \sin(\vartheta_q) \,
\sin(\vartheta_p)$ which, due to Equation~\ref{eq:cos_sin} and to the
fact that $d(q,r) = d(p,r)$, implies that

\begin{align}\label{eq:eq_omega}
  \cos(\omega)
  &  \  =  \  \frac{(q.x - x) \, (p.x - x) + q.y \, p.y}
    {d^2(p,\rp)} \, sgn(q.y) \, sgn(p.y) \nonumber \\
  &  \  =  \  \frac{(q.x - x) \, (p.x - x) + q.y \, p.y}
    {(p.x - x)^2 + (p.y)^2} \, sgn(q.y) \, sgn(p.y) \nonumber \\
  &  \  =  \  \frac{x^2 - (q.x + p.x) \, x
    + q.x \, p.x + q.y \, p.y}
    {x^2 - 2 \, p.x \, x + (p.x)^2 + (p.y)^2}
    \, sgn(q.y) \, sgn(p.y).
\end{align}

For convenience, we subdivide each edge that intersects the $x$-axis
at this point of intersection so that the value of $sgn(q.y)$ is fixed
at each sub-edge no matter where $q$ is.

The coordinates $q.x, q.y$ of intersection point~$q$ can be expressed
in terms of~$x$ by taking into account that $q$ belongs to the line
supporting the edge~$\overline{u v}$ and that $\rp$ is equidistant
from $q$ and $p$.~The former implies that there exists a real
number~$\lambda$ with $0 \le \lambda \le 1$ such that the
vector~$\overrightarrow{u q}$ is $\lambda$ times the
vector~$\overrightarrow{u v}$, which yields
\begin{equation}\label{eq:q_in_uv_x}
  (q.x - u.x) \ = \ \lambda \,
  (v.x - u.x) \quad \Longleftrightarrow \quad q.x \ = \ \lambda \,
  (v.x - u.x) + u.x
\end{equation}
and
\begin{equation}\label{eq:q_in_uv_y}
  (q.y - u.y) \  = \  \lambda \, (v.y - u.y)
  \quad \Longleftrightarrow  \quad
  q.y \  = \  \lambda \, (v.y - u.y) + u.y,
\end{equation}
whereas the latter implies

\begin{align}\label{eq:equidistant}
  &  \quad  d^2(q,\rp) \  = \  d^2(p,\rp) \nonumber \\
  \Longleftrightarrow & \quad (q.x - x)^2 + (q.y)^2
                        \  = \  (p.x - x)^2 + (p.y)^2 \nonumber \\
  \Longleftrightarrow & \quad (q.x)^2 - 2 \, x \, q.x + (q.y)^2 -
                        (p.x)^2 + 2 \, x \, p.x - (p.y)^2 \ = \ 0.
\end{align}

By substituting $q.x, q.y$ from equations~\ref{eq:q_in_uv_x} and
\ref{eq:q_in_uv_y} into Equation~\ref{eq:equidistant}, we get
\begin{align*}
  & \quad \bigl[ \lambda \, (v.x - u.x) + u.x \bigr]^2
    - 2 \, x \, \bigl[ \lambda \, (v.x - u.x) + u.x \bigr]
  \\
  & \quad + \bigl[ \lambda \, (v.y - u.y) + u.y \bigr]^2
    - (p.x)^2 + 2 \, x \, p.x - (p.y)^2 \  = \  0   
  \\
  \Longleftrightarrow & \quad \lambda^2 \, \left[ (v.x - u.x)^2
                        + (v.y - u.y)^2 \right]   
  \\
  & \quad - 2 \, \lambda \, \bigl[ x \, (v.x - u.x)
    - u.x \, (v.x - u.x) - u.y \, (v.y - u.y) \bigr]
  \\
  & \quad - 2 \, x \, (u.x - p.x) + (u.x)^2 + (u.y)^2
    - (p.x)^2 - (p.y)^2 \  = \  0, 
\end{align*}

which has at most $2$ roots for $\lambda$ in terms of $x$ of the form
\begin{equation}\label{eq:lambda_equation}
  \lambda \  = \  \alpha(x) \pm \sqrt{\beta(x)},
\end{equation}
where $\alpha(x)$ and $\beta(x)$ are polynomials of degrees $1$ and
$2$, respectively.

Then, by substituting $q.x, q.y$, and $\lambda$ from equations
\ref{eq:q_in_uv_x}, \ref{eq:q_in_uv_y} and \ref{eq:lambda_equation}
respectively, into Equation~\ref{eq:eq_omega}, we get:
\begin{equation}
  \cos (\omega)
    \  = \  \frac{\gamma(x) \pm \sqrt{\delta(x)}}{\epsilon(x)}
\ \quad \Longrightarrow \ \quad
\omega \ = \  \arccos \left(
\frac{\gamma(x) \pm \sqrt{\delta(x)}}{\epsilon(x)}
\right),
\end{equation}
where $\gamma(x)$, $\delta(x)$, and $\epsilon(x)$ are polynomials of
degrees $2$, $4$, and $2$, respectively.

\section{3D Fixed MCR (Problem~\ref{pro:intro:3Dfixed_mcr})}

In this section we extend our techniques to the 3D-equivalent of
Problem~\ref{pro:intro:fixed_mcr}.
We consider a set $S$ of $n$ points in 3D, a rotation center $r$, and
a non self-intersecting polyhedron $P$ with complexity $m$, i.e., with
$m$ facets.  We identify rotations around~$r$ with points in a sphere
with center $r$.~The following shows how to extend the algorithm we
used to solve the Fixed MCR problem:

\begin{enumerate}
\item {\bf Compute the inclusion regions}. For each
    $p_j\in S$, the intersection of the sphere~$C_{p_j}(\rp)$ with center at
  $\rp$ and radius~$|\overline{\rp p_j}|$ with the polyhedron~$P$ results in a set of regions on
  the boundary of the sphere.~These regions consist of the rotated
  copies of $p_j$
  that lie in the interior of $P$.
    \begin{itemize}
    \item Regardless of $P$ being convex or not, each facet can contribute to those regions a constant number of times. Hence, the overall complexity is $O(m)$.~Moreover, notice that a region can have many
      holes, even in the case that~$P$ is convex.
    \item The sides of these regions on the sphere $C_{p_j}(\rp)$ are
      arcs of circles, since they are the intersection of the sphere
      with a planar facet of the polyhedron.~Then, these sides can be
      computed in constant time each, as the intersection of the planes
      containing the faces of the polyhedron with~$C_{p_j}(\rp)$.
    \item Thus the total time and space complexities of computing all
      the $O(nm)$ regions is $O(nm)$.
    \end{itemize}



  \item {\bf Normalize inclusion regions}. Let $R_{p_j}$ be the set of
    inclusion regions of $p_j\in S$. Consider the unit sphere $S^2$ to
    be centered at $r$ and project the regions to $S^2$. Choose a
    point $N$ in $S^2$ as reference and compute the rotation $\tau_j$
    required to send $p_j$ to $N$. Then compute $\tau_j(R_{p_j})$ to
    set the same reference for all the inclusion regions.

  \item {\bf Computing the depth of $N$}. For later use, we need to
    compute how many of the above regions contain the point $N$ (in
    its interior or boundary), what we call the \emph{depth} of
    $N$. In order to compute it, we perform point location in the
    planar subdivision on the sphere, i.e., we check whether the point
    $N$ belongs to each of the $O(nm)$ regions with a cost of
    $O(\log m)$ per region, for a total time complexity of
    $O(nm\log m)$.

  \item {\bf Stereographic projection}. We use the well-known
    stereographic projection from the point $N$, considered as the
    north pole, to the tangent plane at the antipodal south pole. The
    fact that this projection is conformal implies that circles in the
    sphere are mapped to circles in the
    plane~\cite{needham_2002}.~Therefore, the projections of the
    inclusion regions $\tau_j(R_{p_j})$ have boundaries composed by
    circular arcs.~Because any two sides (arcs of circles) of the
    regions can intersect at most two times, the arrangement
    $\mathcal{A}$ of projected regions can be computed in $O(n^2m^2)$
    time and space, since the total number of intersection points
    between arcs is $O(n^2m^2)$.~Notice that for computing the
    projected arc we proceed as follows: We compute the projection of
    the two endpoints of the arc, and also the projection of a third
    point of the arc (for example the corresponding to the midpoint of
    the arc); with these three projected points, we compute the circle
    containing the projected arc and the projected arc itself.


  \item {\bf Computing the region in $\mathcal{A}$ with largest
      depth}. To do this computation we work on the dual graph of the
    arrangement $\mathcal{A}$, just knowing that the exterior
    (unbounded) face of $\mathcal{A}$ is the face which was containing
    the point~$N$, and hence we know its depth.~Starting in this face,
    we perform a traversal of the dual graph, computing the depth of
    each region and maintaining the region with maximum depth, in a
    total $O(n^2m^2)$ time.

    Computing an interior point of the region with maximum depth, we
    compute its corresponding point in the unit sphere and then we
    know the two parameters $\theta,\varphi$ giving such direction,
    which is the solution of our problem.

\end{enumerate}

\begin{theorem}\label{MCR-3D}
  The Fixed MCR problem in 3D 
  can be
  solved in $O(n^2m^2\log (nm))$ time and $O(n^2m^2)$ space.
\end{theorem}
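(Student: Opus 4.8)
The plan is to prove that the five-step procedure outlined above for Problem~\ref{pro:intro:3Dfixed_mcr} is correct and runs within the stated bounds; its analysis parallels that of the planar algorithm behind Theorem~\ref{lem:restr:nm}. Correctness hinges on the following reduction: rotating $P$ about $\rp$ is inverse to rotating $S$ about $\rp$, so for each $p_j\in S$ the rotations that keep the moved copy of $p_j$ inside $P$ are exactly the inclusion region $R_{p_j}\subseteq C_{p_j}(\rp)$. After radially projecting to the unit sphere $S^2$ and applying the normalizing rotation $\tau_j$, all these regions live in one common frame centered at $N$, so a point $\nu\in S^2$ belongs to $\tau_j(R_{p_j})$ precisely when the rotation it encodes sends $p_j$ into $P$. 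Hence the depth of $\nu$ in the overlay of the $n$ normalized families equals the number of points of $S$ captured by that rotation, and any deepest point yields the optimal $(\theta,\varphi)$ from its spherical coordinates.

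Next I would verify the two geometric facts the procedure rests on. First, for the sphere-polyhedron intersection: each facet of $P$ lies in a plane meeting $C_{p_j}(\rp)$ in a single circle, and a facet of bounded complexity cuts that circle into $O(1)$ arcs; gluing the arcs contributed by all facets along the shared edges of $P$ bounds the inclusion regions $R_{p_j}$, whose total complexity is thus $O(m)$ and which are computable in $O(m)$ time, whether or not $P$ is convex --- such a region may be multiply connected, but this is harmless downstream. Second, stereographic projection from $N$ is a bijection $S^2\setminus\{N\}\to\mathbb{R}^2$ that, being conformal, carries circles to circles~\cite{needham_2002}; it therefore transports the spherical arrangement of the $\tau_j(R_{p_j})$ faithfully onto the planar arrangement $\mathcal A$ of $O(nm)$ circular arcs. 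The one subtlety is that $N$ itself maps to infinity, so a spherical region that contains $N$ becomes an unbounded planar region; consequently the depth of the unbounded face of $\mathcal A$ equals the depth of $N$, which Step~3 precomputes and which seeds the dual-graph traversal of Step~5 --- the latter then recovers the depth of every face of $\mathcal A$ by $\pm 1$ updates across edges.

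The complexity tallies as follows. Steps~1 and~2 cost $O(nm)$ time and space. Step~3 locates $N$ in each of the $n$ spherical subdivisions of size $O(m)$, for $O(nm\log m)$ in all. In Step~4 each of the $O(nm)$ arcs is projected in $O(1)$ time (project three of its points, then fit a circle), and since two distinct circles meet in at most two points, $\mathcal A$ has $O(n^2m^2)$ vertices, edges, and faces and is built in $O(n^2m^2\log(nm))$ time and $O(n^2m^2)$ space by standard arrangement construction; Step~5's traversal is $O(n^2m^2)$. The arrangement construction dominates, yielding $O(n^2m^2\log(nm))$ time and $O(n^2m^2)$ space, matching Problem~\ref{pro:intro:segment_mcr}.

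I expect the main obstacle to be the first geometric fact --- establishing that the sphere-polyhedron intersection has only $O(m)$ complexity and can be computed facet by facet in constant time per facet when $P$ is non-convex and the inclusion regions may have holes --- together with the closely related task of making the normalization in Step~2 genuinely frame-consistent across the $p_j$, so that the depth of a point of $S^2$ really does count the points of $S$ captured by the corresponding rotation and so that the reported $(\theta,\varphi)$ encodes a legitimate rotation. Once these are in place, everything else is the spherical analogue of the argument already carried out for the segment-restricted problem.
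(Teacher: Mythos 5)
Your proposal follows the paper's own five-step algorithm and its complexity accounting essentially verbatim: $O(nm)$ inclusion regions computed facet by facet, normalization to a common reference point $N$, point location to seed the depth of $N$, stereographic projection to a planar arrangement of $O(nm)$ circular arcs with $O(n^2m^2)$ intersections, and a dual-graph traversal, with the arrangement construction dominating at $O(n^2m^2\log(nm))$ time and $O(n^2m^2)$ space. The correctness glue you add (the inverse-rotation reduction, and the observation that a region containing $N$ projects to the unbounded face) is implicit in the paper but not a different argument, so this is the same proof.
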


\section{Concluding Remarks}

We studied the problem of finding a rotation of a simple polygon that
covers the maximum number of points from a given point set.~We
described algorithms to solve the problem when the rotation center is
fixed, or lies on a line segment, a line, or a polygonal
chain.~Without much effort, our algorithms can also be applied when
the polygon has holes, and can be easily modified to solve
minimization versions of the same problems. We also solved the problem
with a fixed rotation center in 3D, leaving as open problem the
3D-analogue of Problem~\ref{pro:intro:segment_mcr}.


\section{Acknowledgements}

David Orden is supported by MINECO Projects MTM2014-54207 and MTM2017-83750-P, as well as
by H2020-MSCA-RISE project 734922 - CONNECT.~Carlos
Seara is supported by projects Gen. Cat. DGR 2017SGR1640, by MINECO
MTM2015-63791-R, and by H2020-MSCA-RISE project 734922 - CONNECT.~Jorge
Urrutia is supported in part by SEP-CONACYT
of M\'{e}xico, Proyecto 80268 and by PAPPIIT  IN102117 Programa de Apoyo a la Investigación e Innovación Tecnológica, Universidad Nacional Autónoma de México..



\bibliographystyle{plain}

\end{document}